\newtheorem{theorem}{Theorem}[section]
\newtheorem{assumption}{Assumption}
\def\bSig\mathbf{\Sigma}
\def\mb#1{\setbox0=\hbox{$#1$}
  \kern-.025em\copy0\kern-\wd0
  \kern.05em\copy0\kern-\wd0
  \kern-.025em\raise.0em\box0}
\newcommand{\floor}[1]{\lfloor #1 \rfloor}
\DeclareMathOperator*{\argmax}{arg\,max}
\begin{document}

    \title{\vspace{-2cm}Frequency Band Analysis of Nonstationary Multivariate Time Series 
\footnote{AMS subject classification. Primary: 62M10. Secondary: 62M15.}
\footnote{Keywords and phrases: Multivariate time series, nonstationary, frequency domain, spectral matrix, electroencephalography (EEG)}
}

\author{Raanju R. Sundararajan\footnote{Email: \href{mailto:rsundararajan@smu.edu}{rsundararajan@smu.edu} (both authors contributed equally to this work)} \, \, and \, \, Scott A. Bruce\footnote{Email: \href{mailto:sabruce@tamu.edu}{sabruce@tamu.edu} } \\
	$^\ddag${\it \small Department of Statistical Science, Southern Methodist University,  Dallas, Texas, USA} \\
	$^\S${\it \small Department of Statistics, Texas A\&M University, College Station, Texas, USA}}
 
\date{}

\maketitle

\begin{abstract}

\noindent Information from frequency bands in biomedical time series provides useful summaries of the observed signal. Many existing methods consider summaries of the time series obtained over a few well-known, pre-defined frequency bands of interest. However, these methods do not provide data-driven methods for identifying frequency bands that optimally summarize frequency-domain information in the time series. A new method to identify partition points in the frequency space of a multivariate locally stationary time series is proposed. 
These partition points signify changes across frequencies in the time-varying behavior of the signal and provide frequency band summary measures  
that best preserve the nonstationary dynamics of the observed series. 
An $L_2$ norm-based discrepancy measure that finds differences in the time-varying spectral density matrix is constructed, and its asymptotic properties are derived. New nonparametric bootstrap tests are also provided to identify  significant frequency partition points and to identify components and cross-components of the spectral matrix exhibiting changes over frequencies. Finite-sample performance of the proposed method is illustrated via simulations. 
The proposed method is used to develop optimal frequency band summary measures for characterizing time-varying behavior in resting-state electroencephalography (EEG) time series, as well as identifying components and cross-components associated with each frequency partition point.
  
\end{abstract}

\hrule
\hrulefill

\section{Introduction}
\label{sec:intro}
Frequency-domain information contained in biomedical time series provides important summaries leading to meaningful physiological interpretations.  Oscillatory behavior 
in nonstationary time series are often characterized by the time-varying spectral matrix, which is a time- and frequency-dependent matrix-valued function. Rather than analyzing the full time-varying spectral matrix, practitioners often partition frequencies into a few
well-known, pre-defined bands that serve as a pseudo partition of the frequency space. 
Numerical summaries of the spectral matrix are then computed within these frequency bands, and these summaries have been shown to provide meaningful biological interpretations of the observed signal. Examples of biomedical time series where frequency bands are identified and associated with physiological characteristics include heart rate variability (HRV) (\citealp{Halletal2004}), local field potential (LFP) (\citealp{lfp-fband}), resting-state functional magnetic resonance imaging (rs-fMRI) (\citealp{rsfmri-fband}) and electroencephalography (EEG) (\citealp{Klimesch1999}). 

In analyzing neuroimaging data, several works have identified frequency bands of interest that carry meaningful biological interpretations. \citet{biswal-rsfmri-lowfreq} analyze correlations in resting state fMRI that characterize functional connectivity in the brain. In their study, high correlations in low frequency oscillations ($<0.1$ Hz) are detected within the sensorimotor cortex of the brain and also with other regions of the brain that associate with motor function. \citet{lowe-rsfmri-lowfreq} is another work that computes cross-correlations in low frequency oscillations ($<0.08$ Hz) of the fMRI signal between widely separated anatomic regions of the brain. To detect regions of the brain that exhibit strong cross-correlations, their work uses these low frequency oscillations to discriminate between different memory tasks. In \cite{rsfmri-fband}, frequency partitions within the 0.01-0.25 Hz band were obtained for rs-fMRI signals and the oscillations from each of these partitions were associated with different biological functions such as respiration, pulse, and vasomotor activity.  In \cite{lfp-lowfreq}, spectral analysis is performed on a LFP signal gathered from the primary visual cortex of a macaque. The power of the spectral density of this LFP signal is computed at the low ($< 10$ Hz), gamma ($25-240$ Hz) and broad ($8-240$ Hz) frequency bands, and changes in power are analyzed in response to various stimuli. In analyzing EEG data, many methods resort to analyzing oscillations for known frequency bands, such as Alpha, Beta, Gamma and Delta, and associate these oscillations with various biological functions. As an example,  \citet{eeg-freqbands-review-paper} provides a review of several methods that utilize pre-defined and well-known frequency bands to analyze resting state EEG data from individuals with neurological disorders. In all of the above cited works, irrespective of the modality of choice (fMRI, LFP or EEG), the selection of frequency bands used to summarize oscillatory patterns is often predetermined either through manual inspection of the time series or through historical precedents. \cite{Doppelmayretal1998} mention the possibility of differences in the endpoints of the Alpha frequency band in EEG data from different individuals, and  \cite{eeg-peak-alpha-freq-changes} is another example that discusses differences in the peak frequency in the Alpha band of EEG data among individuals from different sexes.  Analyses like these point to the need for a data-driven method for automatically identifying frequency bands that best describe changes in the frequency space of neuroimaging time series data.   

Several methods in the signal processing and applied harmonic analysis literature pursue time-frequency analysis of univariate nonstationary time series. Time-frequency analysis helps understand the time-varying properties of different frequency components of the time series. For obtaining a time-frequency description of the observed signal, \citet{Flandrin1998} describes a few solutions, such as the short-time Fourier transform and wavelet transform, along with their properties. While most methods aim at estimating the time-varying amplitudes of the signal, very few methods discuss estimation of changes happening in the frequency space of nonstationary time series. An online algorithm for detecting a single prominent time-varying frequency band is proposed in \citet{Tiganjetal2012}. In their work, at each local time window, the prominent frequency band is estimated by using a band-limited signal as the input. In  \citet{Cohenetal2016}, the observed signal is assumed to be composed of multiple, uncorrelated cyclostationary processes \citep{Gardneretal2006}, and they provide an algorithm for detecting multiple peaks in the spectral density of the signal.

In the statistical literature, time-frequency analysis of nonstationary time series has been studied using evolutionary spectra (\citealp{Priestley1965}) and  the time-varying spectral density of locally stationary processes (\citealp{Dahlhaus1997}). Most methods utilize the time-varying spectral matrix or its finite sample estimates for detecting changes in the time space, i.e., temporal change point detection; see \citet{Adak1998}, \citet{Ombaoetal2005}, \citet{Kirchetal2015}, \citet{Preussetal2015} for examples in univariate and multivariate locally stationary time series. In \citet{SchroderandOmbao2016}, temporal change points are obtained over user-specified frequency bands and the obvious limitation in their work is the need for the user to specify the frequency bands over which the temporal change points are found. \citet{bruce_etal_2020} is a recent work that considers univariate locally stationary time series and aims at finding partition points in the frequency space that best describe the nonstationary characteristics of the observed time series. Their work constructs an estimate of the time-varying spectral density by assuming a piecewise stationary approximation and applying a multitaper estimator of the spectral density. A CUSUM statistic designed to spot changes in the frequency space is formulated based on this estimator, and significant partition points are obtained.   

\noindent In this work, we propose a new method to identify partition points in the frequency space of a multivariate locally stationary time series. These partition points are detected such that they best preserve the nonstationary dynamics of the time-varying spectral density matrix. To detect these partition points, an $L_2$ norm-based discrepancy measure is constructed using the time-varying spectral matrix. This measure computes differences in the time-varying spectral matrix over local neighborhoods of frequencies and its asymptotic properties are derived. With the discrepancy measure viewed as our test statistic, a new nonparametric bootstrap test is proposed to obtain the critical values. The proposed bootstrap procedure is also further utilized in identifying the components and cross-components contributing to the changes in the frequency space. The proposed method is motivated by an application in analyzing resting-state electroencephalography (EEG) time series from 14 individuals. The experiment involves a straightforward eyes-open/eyes-closed scheme with the signal being recorded from 16 electrodes \citep{gregoire_cattan_2018_2348892}. After down-sampling the original signal to the rate of 64 Hz, the left plots in Figure \ref{fig:intro_figure} depicts the standardized EEG time series from the Oz occipital channel in participants 2 and 13. These illustrated signals cover a time segment of the entire experiment that includes a total of five consecutive blocks, with each block being an eyes-closed or eyes-open scenario. The right plots in the same figure show the estimated time-varying spectral densities for this channel, and the time-varying behavior can be witnessed around the 10 Hz frequency. 

This work addresses several critical aspects of frequency-domain analysis of nonstationary multivariate signals.  First, a data-driven estimator for frequency bands of interest for each individual is developed, as opposed to investigating commonly known pre-defined frequency bands that are assumed to be the same for all individuals.  Second, the proposed method is used to better understand variability in estimated partition points across different individuals from a population of interest.  Finally, it is typically not the case that all components of the spectral matrix exhibit changes across frequencies for each partition point.  Accordingly, the proposed method allows for detection of the particular components and cross-components of the time series that are contributing to each estimated frequency partition point. With the proposed method being uniquely positioned to address such problems, in Section \ref{sec:application} we illustrate and discuss the application in detail.

\begin{figure}[H]
    \centering
   \begin{subfigure}[b]{0.48\textwidth}
     \includegraphics[width=1\textwidth]{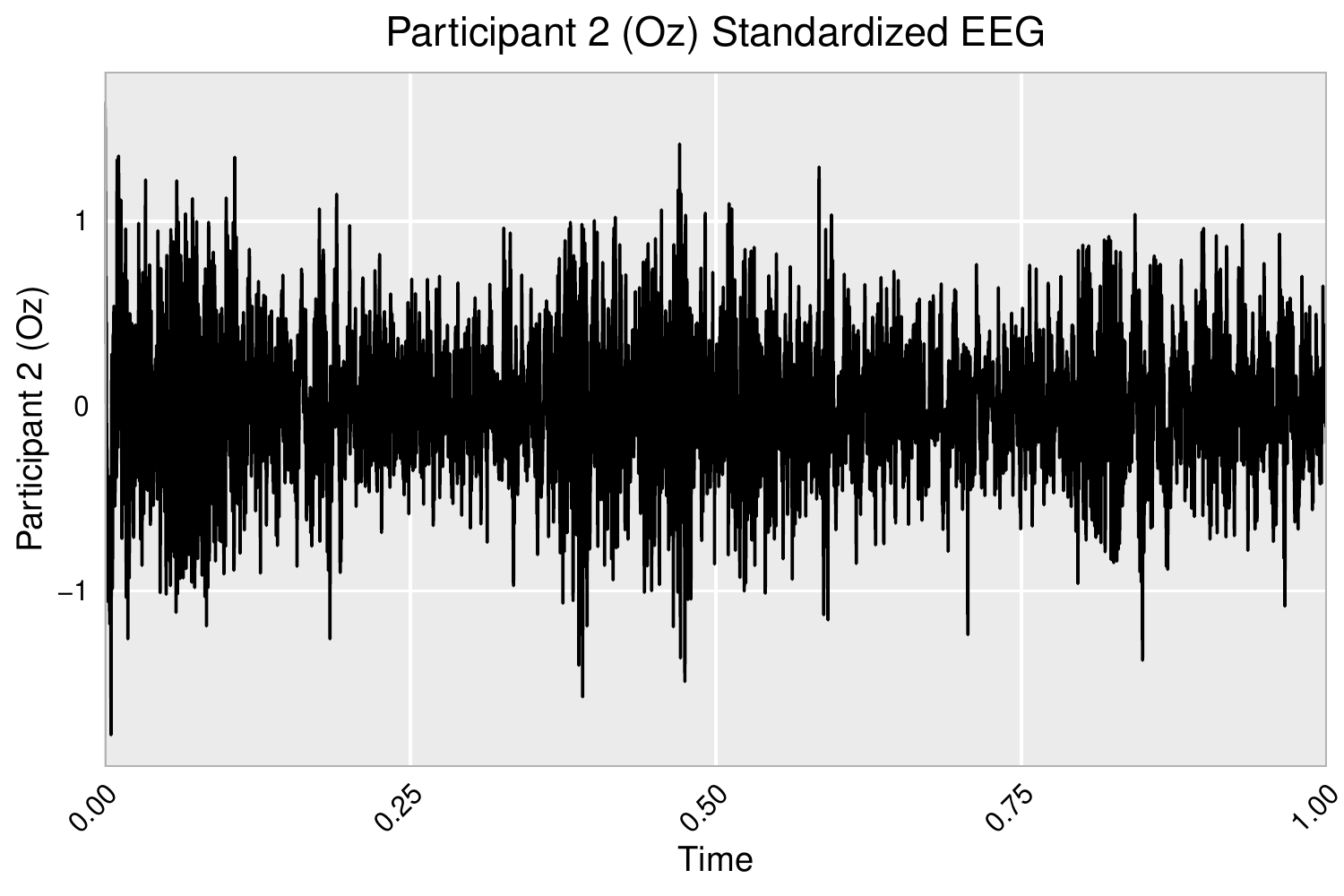}
    \caption{}
   \end{subfigure} 
     \begin{subfigure}[b]{0.48\textwidth}
     \includegraphics[width=\textwidth,trim={0.3cm 1cm 1cm 0.75cm},clip]{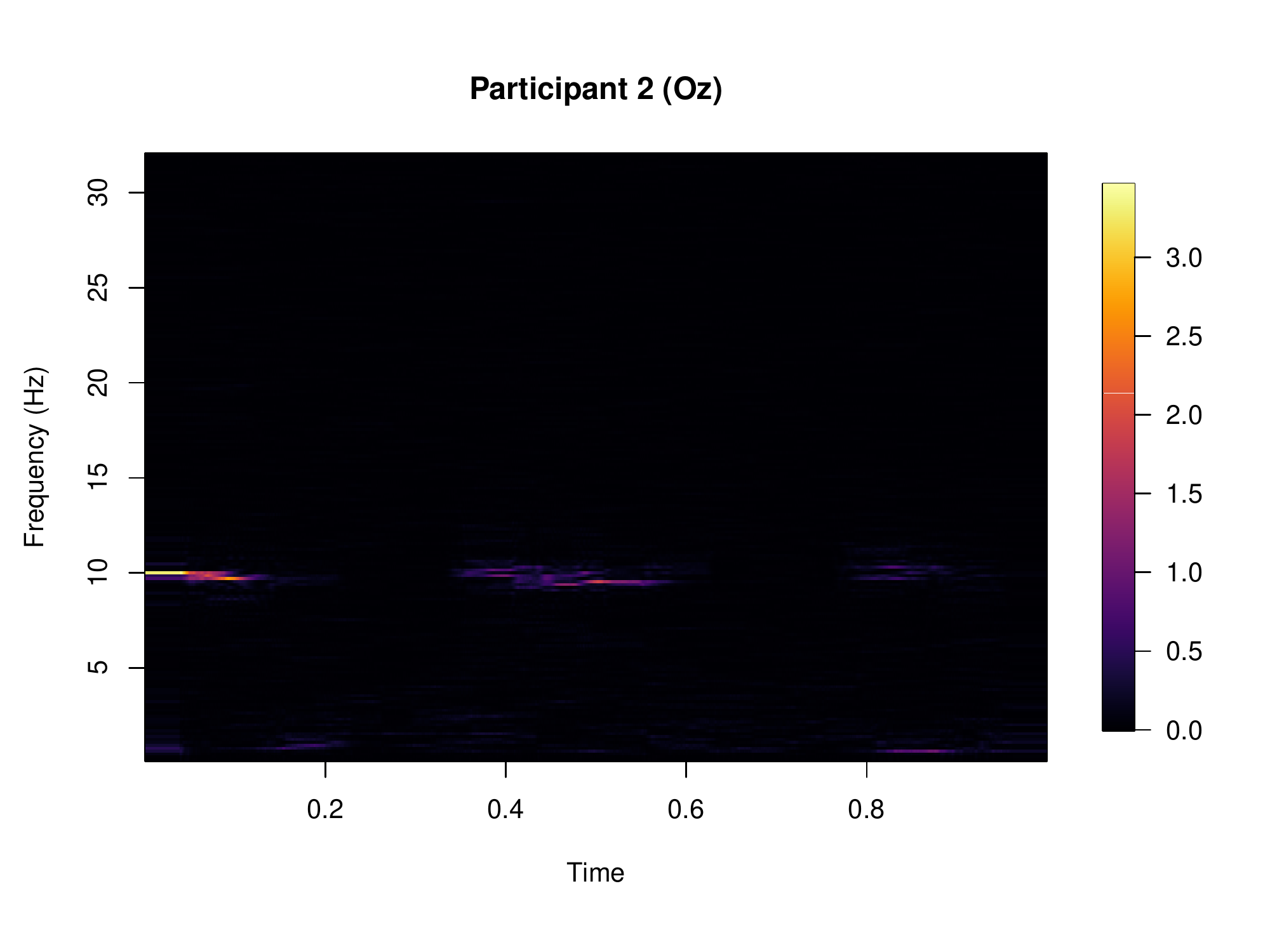}
    \caption{}
   \end{subfigure} 
      \begin{subfigure}[b]{0.48\textwidth}
     \includegraphics[width=1\textwidth]{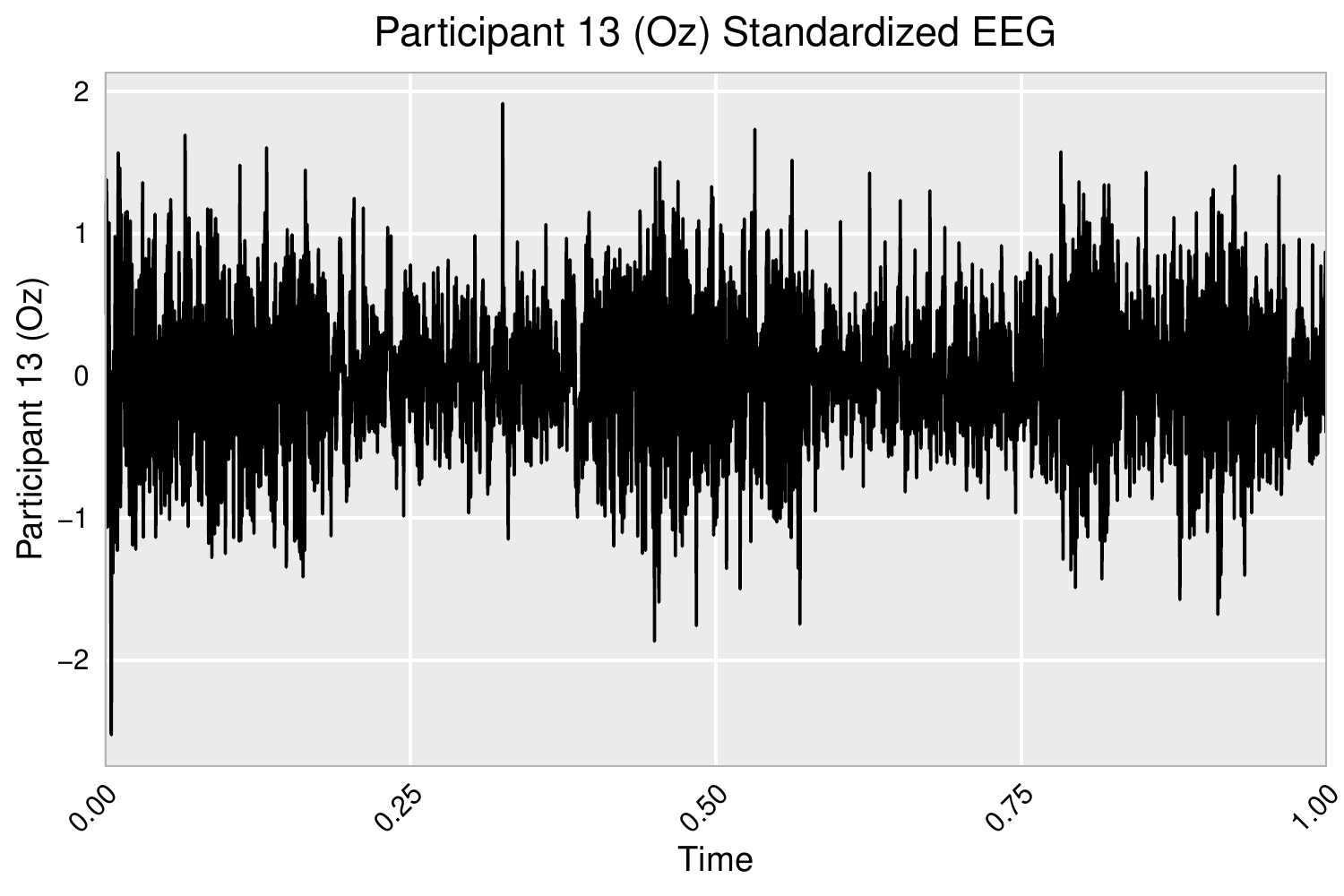}
    \caption{}
   \end{subfigure} 
     \begin{subfigure}[b]{0.48\textwidth}
     \includegraphics[width=\textwidth,trim={0.3cm 1cm 1cm 0.75cm},clip]{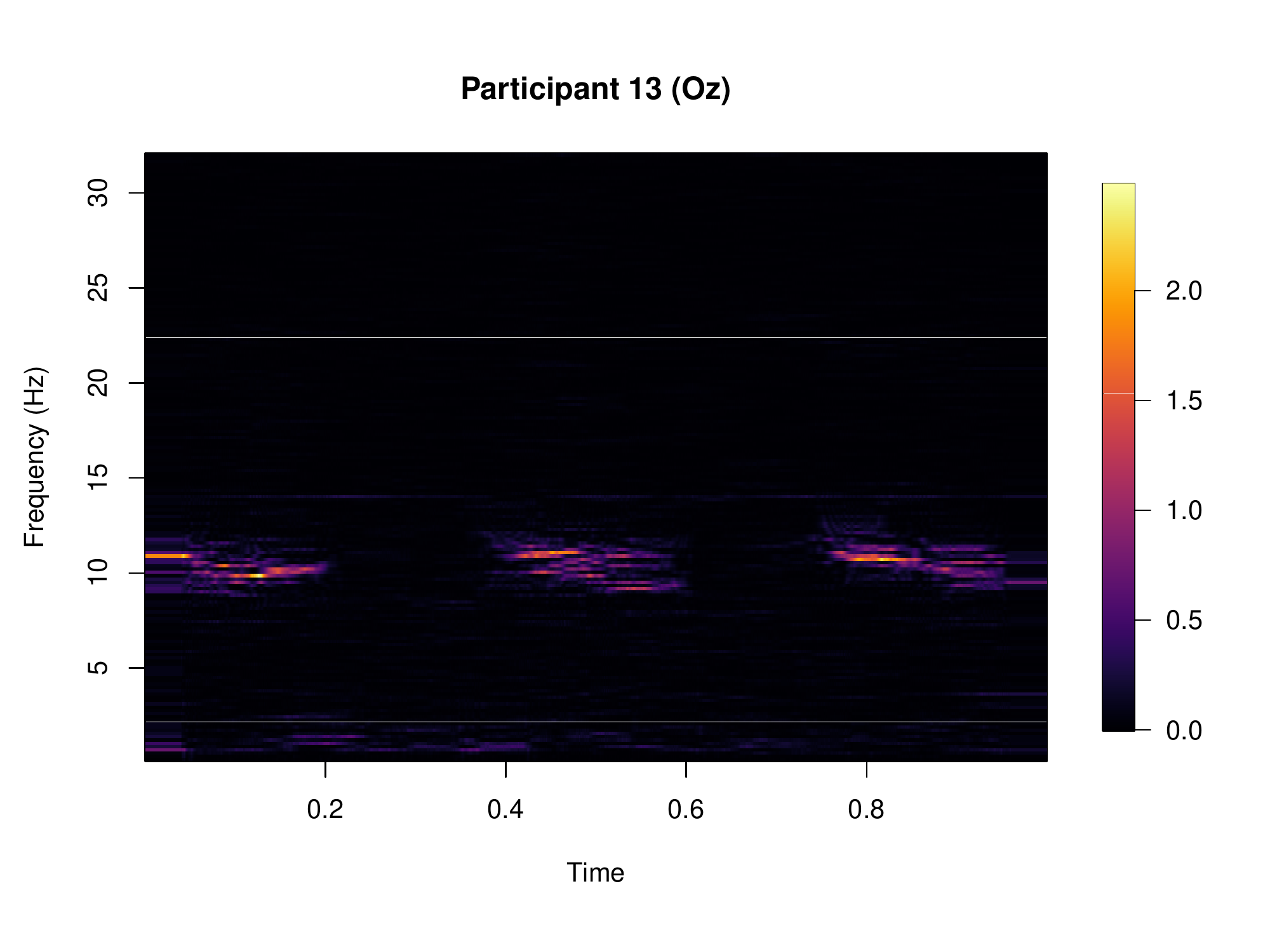}
    \caption{}
   \end{subfigure} 
    \caption{Oz channel EEG for participants 2 and 13 and corresponding estimated spectral densities.}
    \label{fig:intro_figure}
\end{figure}

\noindent The rest of the paper is organized as follows. Section \ref{sec:methodology} describes the proposed method in detail along with the theoretical results. Section \ref{sec:simulations} discusses the finite sample performance of the method using a few simulation schemes. The application to modeling resting-state EEG time series data is presented in Section \ref{sec:application}. The concluding remarks are given in Section \ref{s:conclusion}.

\section{Methodology}
\label{sec:methodology}

In this section, we describe our proposed method to find frequency partition points in the time-varying spectral matrix of a locally stationary process.  First, the model for the locally stationary process and frequency-banded time-varying behavior is introduced.  A discrepancy measure and estimator is then proposed for identifying frequency partition points that characterize the frequency band structure.  Resampling-based testing procedures are then developed to determine the significance of potential frequency partition points and to identify components associated with significant frequency partition points.


\subsection{Model}
\label{sec:model}
We begin with the definition of a locally stationary process using a two-sided MA$(\infty)$ representation followed by the required assumptions on the coefficient matrices (\citealp{Dahlhaus1997}, \citealp{dahlhaus2000}).  
Let $X_{t,T}$ be a $p$-variate locally stationary process given by 
\begin{equation}\label{eq:locally_stationary_ts}
X_{t,T} = \sum_{j = -\infty}^{\infty} \Phi_{t,T,j} \varepsilon_{t-j}, \;\;\; t=1,2,\hdots,T,
\end{equation}
where $\varepsilon_{t}$ are i.i.d Gaussian with unit variance matrix. The time-varying coefficient matrices, $\Phi_{t,T,j}$, are assumed to be temporally smooth in the following sense.

\begin{assumption}[Temporal smoothness]
\label{as:smooth}
There exists temporally smooth functions $\Phi: \lbrack 0,1 \rbrack \times \mathbb{Z} \rightarrow \mathbb{R}^{p \times p}$ such that
$$
\sum_{j = -\infty}^{\infty} \sup_{t=1,2,\hdots,T} \left|\left| \Phi_{t,T,j} - \Phi\left(t/T , j\right) \right|\right|_{\infty} = O\left(1/T\right), 
$$
where $||\cdot ||_{\infty}$ denotes the infinity norm, and
$$\sum_{j = -\infty}^{\infty} \sup_{u \in \lbrack 0,1 \rbrack } || \Phi(u , j) ||_{\infty} |j| < \infty,
$$
$$
\sum_{j = -\infty}^{\infty} \sup_{u \in \lbrack 0,1 \rbrack } || \Phi^{'}(u , j) ||_{\infty} |j| < \infty,
$$
$$\sum_{j = -\infty}^{\infty} \sup_{u \in \lbrack 0,1 \rbrack } || \Phi^{''}(u , j) ||_{\infty}  < \infty.
$$\end{assumption}

 With the above model, the  $p \times p$ time-varying spectral matrix of $X_{t,T}$ is given by 
\begin{equation}\label{eq:tvspec}
f(u,\omega) = \frac{1}{2\pi} \sum_{r,s} \Phi(u,r) \Phi(u,s)^{'} e^{-i 2 \pi \omega (r-s)}, 
\end{equation}

 for $\omega \in \lbrack -1/2 ,1/2 \rbrack$. With the above temporal smoothness assumption, the series $X_{t,T}$ can also be expressed using the Cram\'{e}r representation with a time-varying transfer function matrix that can be approximated by a temporally-smooth matrix-valued function $A: \lbrack 0,1 \rbrack \times \lbrack -1/2 ,1/2 \rbrack \rightarrow \mathbb{C}^{p \times p} $. Then, the time-varying spectral matrix can be written as $f(u,\omega) = A(u,\omega) A^{*} (u,\omega)$, where $A^{*}(u,\omega)$ denotes the conjugate transpose.  

 To characterize the nonstationary behavior of the spectral matrix beyond a simple level shift, we consider the demeaned time-varying spectral matrix given by
\begin{equation}\label{eq:demeaned_spectral_matrix}
g(u, \omega) = f(u ,  \omega) - \int_{0}^{1} f(u , \omega) du.
\end{equation} 

\noindent We will further assume that $g(u,\omega)$ has the following partition in the frequency space.

\begin{assumption}[Frequency-banded time-varying behavior]

\begin{equation*}\label{eq:g_frequency_partition}
g(u , \omega) = \begin{cases}
       g_1(u) &\quad\text{if} \;\; \omega \in [0,\omega_1) \\
       g_2(u) &\quad\text{if} \;\; \omega \in \lbrack \omega_1 ,\omega_2) \\
       \vdots \\
       g_{K+1}(u) &\quad\text{if} \;\; \omega \in \lbrack \omega_{K} , 1/2],        
       \end{cases}
\end{equation*} 
where $P_K = \{ \omega_1,\omega_2,\hdots, \omega_K \}$ denotes the set of $K$ frequency partition points in the interval $(0,1/2)$, with $K$ and $P_K$ being fixed and unknown.
\end{assumption}


\subsection{Comparing frequency-specific time-varying behavior}
\label{sec:teststatistic}
 To determine both $K$ and $P_K$, we consider the following $L_2$ norm-based discrepancy measure on the demeaned spectral matrix $g(u,\omega)$. For any $\omega \in (0,1/2)$ we have,
\begin{equation}\label{eq:main_discrepancy_measure}
D(\omega) = \frac{1}{\delta} \int_{u=0}^{1} \int_{\lambda = 0}^{\delta} \left|\left| g(u, \omega - \lambda) - g(u, \omega + \lambda) \right|\right|^2 \;  d\lambda \; du,
\end{equation}
where $|| \cdot ||^2$ denotes the squared $L_2$ norm, and $\delta > 0$. Observe that at any frequency $\omega$, the nonnegative measure $D(\cdot)$ represents the difference in the demeaned time-varying spectral matrix in a local neighborhood of frequencies surrounding $\omega$.

 In order to estimate the discrepancy measure in \eqref{eq:main_discrepancy_measure}, we use the local periodogram $I_N(u,\omega)$ (\citealp{Dahlhaus1997}) given by 
\begin{equation} \label{eq:local_periodogram}
I_N(u,\omega) =  J_N(u,\omega)J_N^{*}(u , \omega), \;\; \textrm{with} \;\; J_N(u , \omega) = \frac{1}{\sqrt{2 \pi N}} \sum_{s = 0}^{N-1} X_{ \floor{uT} - N/2 + 1 + s,T } \; e^{-i 2 \pi \omega s},
\end{equation}
where $N$ is the length of the local neighborhood around time point $u$ and $J_N^{*}(u,\omega)$ denotes the conjugate transpose. Viewing $I_N(u,\omega)$ as a local estimate of the time-varying spectral matrix $f(u,\omega)$, we obtain the estimated version of our measure in \eqref{eq:main_discrepancy_measure} as

\begin{equation}\label{eq:main_discrepance_measure_estimated}
\widehat{D}(\omega) = \frac{1}{T} \sum_{t=1}^T \frac{1}{W} \sum_{k=1}^{W} \left|\left| \widehat{g} \Big( t/T , \omega - \lambda_k \Big) - \widehat{g} \Big( t/T , \omega + \lambda_k \Big) \right|\right|^2, 
\end{equation}  
where $\widehat{g}( t/T , \omega - \lambda_k ) = I_N( t/T , \omega - \lambda_k ) - \frac{1}{T} \sum_{t_1=1}^{T} I_N( t_1/T , \omega - \lambda_k ) $, and $\lambda_k = k/N, \; k=1,2,\hdots,W$. $W$ corresponds to the parameter $\delta$ in \eqref{eq:main_discrepancy_measure}, and in finite sample situations, we resort to a multiscale approach in which we consider a range of plausible values for $W$.  This approach is discussed in Section \ref{sec:choice_of_W}. 

 To better understand the behavior of the measure $\widehat{D}(\cdot)$ in \eqref{eq:main_discrepance_measure_estimated}, we consider the following sets.
\begin{align} \label{eq:points_sets_defn}
\begin{split}
C_N &= \left\{ \frac{W}{N} , \frac{W+1}{N} , \hdots , \frac{1}{2} - \frac{W}{N}  \right\}, \\
C_{N,1} &= \bigcup_{j=1}^{K} \left\{ \omega_j - \frac{W}{N}, \hdots , \omega_j + \frac{W}{N} \right\}, \; \mathrm{and}  \\ 
C_{N,2} &= C_{N} \setminus C_{N,1}.
\end{split}
\end{align}

 Here, $C_N$ denotes the set of all candidate partition points minus a neighborhood of length $W/N$ on the ends. The set $C_{N,1}$ takes the union of the neighborhoods of all partition points, where  each neighborhood around a partition point is of radius $W/N$. The set $C_{N,2}$ denotes the set of all points located at a distance of least $W/N$ from all partition points. To establish large sample properties of the measure $\widehat{D}(\cdot)$, we assume frequency partition points are reasonably well-separated as follows.

\begin{assumption}[Separated partition points]
\label{as:separated}
There exists a constant $c \in (0,1/2)$ such that $\omega_1 > c$, $\omega_K < \frac{1}{2} - c$, $\underset{j=1,2,\hdots, K-1}{\textrm{min}} \; |\omega_j - \omega_{j+1}| \geq c$. 
\end{assumption}


With these assumptions, the following theorem describes the asymptotic behavior of the proposed estimator $\widehat{D}(\omega)$.

\begin{theorem}\label{thm:consistency_test_statistic}
Suppose that the conditions stated in Assumptions \ref{as:smooth}-\ref{as:separated} hold. Let $N \rightarrow \infty$ and $T \rightarrow \infty$ such that $N/T \rightarrow 0$ and $W/N \rightarrow c$. 
Then, 
\begin{itemize}
\item[(a)] for $\omega \in C_{N,2}$,
$
\widehat{D}(\omega) \overset{p}{\longrightarrow} 0,
$
\item[(b)] for $\omega \in P_K = \{ \omega_1 , \omega_2 , \hdots , \omega_K \}$,
$$
\widehat{D}(\omega) \overset{p}{\longrightarrow} \frac{1}{2 \pi c} \int_{0}^1 \int_{0}^{2 \pi c} \sum_{a,b=1}^p \Big( g_{a,b}(u,\omega - \lambda) - g_{a,b}(u,\omega+\lambda) \Big) \Big( g_{a,b}(u,\omega - \lambda) - g_{a,b}(u,\omega + \
\lambda) \Big)^{*} \; d \lambda \; du,
$$
 where $a,b=1,2,\hdots,p$, and $\overset{p}{\longrightarrow}$ denotes convergence in probability. 
\end{itemize}
 
\end{theorem}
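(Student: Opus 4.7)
The plan is to reduce $\widehat D(\omega)$ to a Riemann-sum approximation of a deterministic $L_2$ discrepancy in the demeaned spectrum $g$, and then interpret the resulting integral in each of the two cases. The key algebraic step is the elementary identity $T^{-1}\sum_t \|x_t-\bar x\|^2 = T^{-1}\sum_t \|x_t\|^2 - \|\bar x\|^2$, applied to the inner time-average.

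First I would set $\bar I(\omega)=T^{-1}\sum_t I_N(t/T,\omega)$ and use this identity to write, with $\omega_1=\omega-\lambda_k$ and $\omega_2=\omega+\lambda_k$,
\[
\frac{1}{T}\sum_{t=1}^T \|\widehat g(t/T,\omega_1)-\widehat g(t/T,\omega_2)\|^2 = \frac{1}{T}\sum_{t=1}^T \|I_N(t/T,\omega_1)-I_N(t/T,\omega_2)\|^2 - \|\bar I(\omega_1)-\bar I(\omega_2)\|^2,
\]
and analyze the two pieces separately. For the subtracted piece I would invoke the consistency of the integrated local periodogram under Assumption~1 (following Dahlhaus, 1997), namely $\bar I(\omega)\overset{p}{\longrightarrow}\int_0^1 f(u,\omega)\,du$ uniformly in $\omega$, which yields $\|\bar I(\omega_1)-\bar I(\omega_2)\|^2\overset{p}{\longrightarrow}\|\int_0^1[f(u,\omega_1)-f(u,\omega_2)]\,du\|^2$. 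For the leading piece I would expand the Frobenius norm entry-wise and apply second- and fourth-order moment calculations for Gaussian quadratic forms, exploiting the asymptotic decorrelation of $I_N(u,\omega_1)$ and $I_N(u,\omega_2)$ at distinct Fourier frequencies.

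Combining the two pieces via the identity $\int_0^1 \|h(u)\|^2\,du - \|\bar h\|^2 = \int_0^1 \|h(u)-\bar h\|^2\,du$ applied to $h(u)=f(u,\omega_1)-f(u,\omega_2)$ (so that $h-\bar h = g(u,\omega_1)-g(u,\omega_2)$), each fixed-$k$ summand converges in probability to $\int_0^1 \|g(u,\omega-\lambda_k)-g(u,\omega+\lambda_k)\|^2\,du$. The outer average $(1/W)\sum_{k=1}^W$ is a Riemann sum in $\lambda_k=k/N$ with spacing $1/N$, and since $W/N\to c$ it converges to $(1/c)\int_0^c$. Overall,
\[
\widehat D(\omega)\overset{p}{\longrightarrow} \frac{1}{c}\int_0^c \int_0^1 \|g(u,\omega-\lambda)-g(u,\omega+\lambda)\|^2\,du\,d\lambda,
\]
matching the statement's limit up to a convention on the units of $\lambda$ producing the $2\pi c$ factor.

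The two cases are then immediate. For (a) with $\omega\in C_{N,2}$, the hypothesis $W/N\to c$ together with the separation in Assumption~3 forces $(\omega-c,\omega+c)$ to lie within a single frequency band, hence $g(u,\omega-\lambda)=g(u,\omega+\lambda)$ for every $\lambda\in(0,c)$ and the limit is $0$. For (b) with $\omega=\omega_j\in P_K$, the points $\omega\pm\lambda$ for $\lambda\in(0,c)$ straddle $\omega_j$ and lie in adjacent bands $j$ and $j+1$, so the integrand equals the constant $\|g_j(u)-g_{j+1}(u)\|^2$ and produces the claimed nonzero limit. The main obstacle is the moment analysis of the leading squared-periodogram piece: each term $\|I_N(u,\omega_1)-I_N(u,\omega_2)\|^2$ carries an $O(1)$ variance contribution that does not vanish pointwise, so one must argue that this contribution cancels against $\|\bar I(\omega_1)-\bar I(\omega_2)\|^2$ after the full $(t,k)$ averaging. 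This cancellation hinges on the exact Gaussian fourth-moment structure of the local periodogram under Assumption~1 and requires careful bookkeeping of its cross-time and cross-frequency covariances, which is the technically delicate heart of the argument.
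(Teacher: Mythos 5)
Your overall skeleton---compute the limit of the mean, argue the variance is negligible, and pass to the integral by a Riemann sum in $\lambda_k=k/N$ using $W/N\to c$---is the same as the paper's, which expands $\widehat D(\omega)$ entry-wise into the sixteen product-moment terms of the form $I_N I_N^{*}$, $I_N\bar I_N^{*}$, $\bar I_N\bar I_N^{*}$, evaluates each expectation via the Gaussian product-moment formula (Theorem 2.3.2 of Brillinger, 1981), and then bounds the second cumulant of $\widehat D(\omega)$ by $O(1/T)$. Your variance-decomposition identity and the observation that $h-\bar h = g(u,\omega_1)-g(u,\omega_2)$ are a clean reorganization of that same expansion, and your case analysis in (a) and (b) using Assumption 3 is correct.

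However, there is a genuine gap at exactly the step you flag as the ``technically delicate heart,'' and the mechanism you propose for closing it cannot work. Because the local periodogram is an inconsistent estimator, $E\|I_N(t/T,\omega_1)-I_N(t/T,\omega_2)\|^2$ does not converge to $\|f(u,\omega_1)-f(u,\omega_2)\|^2$: the Wick pairings give $E\big[I_{N,ab}(\omega_j)I_{N,ab}^{*}(\omega_j)\big]\to |f_{ab}(\omega_j)|^2$ plus a strictly positive second-moment excess, while the cross-frequency product $E\big[I_{N,ab}(\omega_1)I_{N,ab}^{*}(\omega_2)\big]$ at distinct Fourier frequencies loses the corresponding pairing because periodogram ordinates decorrelate. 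Hence your leading piece $T^{-1}\sum_t\|I_N(t/T,\omega_1)-I_N(t/T,\omega_2)\|^2$ converges to $\int_0^1\|f(u,\omega_1)-f(u,\omega_2)\|^2\,du$ \emph{plus} a positive $O(1)$ bias, and this bias survives both the $t$- and $k$-averaging, since averaging reduces variance, not bias. It also cannot cancel against $\|\bar I(\omega_1)-\bar I(\omega_2)\|^2$: the local segments overlap for only an $O(N/T)$ fraction of time pairs, so $\bar I(\omega)$ is asymptotically deterministic and the subtracted term contributes exactly $\|\int_0^1[f(u,\omega_1)-f(u,\omega_2)]\,du\|^2$ with no compensating inflation. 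To complete the argument you must carry out the fourth-moment bookkeeping for all sixteen cross terms, as the paper does, and show explicitly how the coincident-frequency second-moment excess is disposed of in arriving at the stated limit; asserting the cancellation is not enough, and your decomposition in fact isolates the offending term rather than removing it.
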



 Proof is made available in the Appendix and relies on computing the asymptotic mean and variance of $\widehat{D}(\cdot)$. This result means that the discrepancy measure will approach zero for frequencies not near a partition point and approach a positive constant for frequencies representing partition points. 
The above result also implies that a hypothesis test using $\widehat{D}(\cdot)$ as the test statistic leads to a consistent test. Such a test would be defined in the following manner. Let $\eta_{T,\omega}$ be the threshold satisfying $\underset{T \rightarrow \infty}{\lim} \; \eta_{T,\omega} \geq B >0 $, for some positive constant $B$. Then, a candidate partition point $\omega \in (0, 1/2)$ is designated as a partition point if 
\begin{equation}\label{eq:test_threshold_description}
\widehat{D}(\omega) > \eta_{T,\omega}. 
\end{equation}  

 With the discrepancy measure $\widehat{D}(\omega)$ as the test statistic, the threshold $\eta_{T,\omega}$ can be determined by the bootstrap procedure described in Section \ref{sec:bootstrap_sections}. 

\subsection{Bootstrapping locally stationary processes}
\label{sec:bootstrap_sections}

Here we describe the resampling procedure to obtain the threshold $\eta_{T,\omega}$ given in \eqref{eq:test_threshold_description}. This threshold can be viewed as a critical value of the discrepancy measure $\widehat{D}(\omega)$ from \eqref{eq:main_discrepance_measure_estimated}, under the null hypothesis that $\omega$ is not a partition point. We resort to a nonparametric bootstrap method that generates samples under the null hypothesis in order to approximate $\eta_{T,\omega}$ and provide a corresponding p-value. 



Under the null hypothesis, assume $X_{t,T} = Y_t + \sigma(t/T) Z_t$ where $Y_t$ is a second-order stationary $p$-variate process, $\sigma(t/T)$ is a $p \times p$ time-varying matrix, $Z_t$ is i.i.d. $N(0,I_p)$, and $Y_t$ is independent of $Z_t$.  In this case, the spectral matrix of $X_{t,T}$ is $f_{X}(t/T,\omega) = f_Y(\omega)+\sigma(t/T)\sigma'(t/T)$, where $f_Y(\omega)$ is the spectral matrix of $Y_t$. However, $f_Y(\omega)$ does not appear in the demeaned time-varying spectral matrix \eqref{eq:demeaned_spectral_matrix} since $f_Y(\omega) = \int_0^1 f_Y(\omega)du$.  Therefore, we can simplify computation for our bootstrap procedure by avoiding estimation of $f_Y(\omega)$ and instead assuming $X_{t,T} = \sigma(t/T) Z_t$ in order to generate bootstrap samples.
Let $\omega \in C_N$ be a candidate frequency partition point.  The bootstrap procedure is carried out through the following steps. 

\begin{itemize}
    \item[Step 1.] Assume $X_{t,T} = \sigma(t/T) Z_t$, where $Z_t$ is i.i.d. $N(0,I_p)$. Compute the time-varying variance matrix estimator as 
    \begin{equation}\label{eq:tv-variance-matrix-estimator}
    \widehat{\Gamma}_{X,0} (u) = \frac{1}{T} \sum_{t=1}^T X_t X_t^{'} K_h(u - t/T),    
    \end{equation}
    where $K_h(u)  = \frac{1}{h}K(\frac{u}{h})$, $h$ denotes the bandwidth, and $K(\cdot)$ is a symmetric kernel function that integrates to 1. 
    
    \item[Step 2.] Compute the time-varying square root matrix $\widehat{\sigma}(u) = \Big( \widehat{\Gamma}_{X,0}(u) \Big)^{1/2} $. 
    
    \item[Step 3.] Obtain $R$ bootstrap resamples as $X_{t,T}^{(r)} = \widehat{\sigma}(t/T) Z_t^{(r)}$, where $Z_t^{(r)} \sim N(0,I_p)$, $t=1,2,\hdots,T$, and $r=1,2,\hdots R$.
    
    \item[Step 4.] With each resample $X_{t,T}^{(r)}$, $t=1,2,\hdots,T$, compute the discrepancy measure $\widehat{D}^{(r)}(\omega)$. 
    
    \item[Step 5.] Obtain the p-value of this test as $\frac{ \sum_{r=1}^R \mathbbm{1}_{ \widehat{D}^{(r)}(\omega) > \widehat{D}(\omega)} }{R}$, where $\widehat{D}(\omega)$ is the observed value of the test statistic. 
    
\end{itemize}

	





 Recall that the assumption in Step 1 follows from the behavior of the demeaned spectral matrix $g(u,\omega)$ under the null hypothesis that $\omega$ is not a partition point. In this case, $g(u,\omega) = g(u)$ which does not depend on frequency $\omega$. Under this assumption, Steps 2-5 then describe the resampling procedure that produces the required p-value. In Step 1, in finite sample situations discussed in Sections \ref{sec:simulations} and \ref{sec:application}, we utilize the triangular kernel for $K(\cdot)$ with a bandwidth $h = T^{-0.3}$.

\subsection{Finding the number and locations of partition points}
\label{sec:detecting_number_locations_cpts}

Here we describe the steps to detect the locations and number of  frequency partition points. Recall from \eqref{eq:points_sets_defn}, the set $C_{N}$ represents the set of candidate frequency partition points to search over. Our iterative procedure to detect the locations of partition points involves the following steps. 

\begin{itemize}

\item[Step 0.] Initialize the set $\widetilde{C}_N = C_N$ and $\widetilde{P} = \emptyset $, where $\widetilde{P}$ is the final set of partition points returned by the procedure. 

\item[Step 1.] Compute the measure $\widehat{D}(\omega)$, for every $\omega \in C_N$.  

\item[Step 2.] Find the point $\omega^{*}$, where
\begin{equation}
\omega^{*} = \textrm{arg.} \; \underset{\lambda \in \widetilde{C}_N}{ \textrm{max} } \; \widehat{D}(\lambda).
\end{equation}


\item[Step 3.] Determine the p-value for testing if $\omega^*$ is a partition point using the resampling procedure described in Section \ref{sec:bootstrap_sections}.  If found to be significant, set $\widetilde{P} = \widetilde{P} \bigcup
 \{ \omega^{*} \} $, and set $\widetilde{C}_N = \widetilde{C}_N \setminus  \{ \omega^{*} - \frac{W}{N}, \hdots , \omega^{*} + \frac{W}{N} \}$. 
 
 \item[Step 4.] Repeat Steps 2 and 3 until the significance test in Step 3 fails to return a significant frequency partition point.  
\end{itemize}
 The above iterative procedure results in the set $\widetilde{P}$ that contains the final set of frequency partition points, and the cardinality of this set provides an estimate $\widehat{K}$ of the number of partition points.  
Next, we provide a large sample result on the estimate $\widehat{K}$ of the number of partition points obtained through this procedure. 

\begin{theorem}\label{thm:consistency_number_of_cpts}
Suppose that the conditions stated in Assumptions \ref{as:smooth}-\ref{as:separated} hold. Let $N \rightarrow \infty$ and $T \rightarrow \infty$ such that $N/T \rightarrow 0$ and $W/N \rightarrow c$. 
Then, 

\begin{equation}
P( \widehat{K} \neq K  ) \overset{T \rightarrow \infty}{ \longrightarrow } 0.  
\end{equation} 
\end{theorem}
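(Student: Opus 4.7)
The plan is to combine the pointwise limits from Theorem \ref{thm:consistency_test_statistic} with an induction on the iterations of the detection procedure, handling under-detection and over-detection separately. Throughout, I would work on a single high-probability event on which the relevant convergences hold uniformly over the finite search grid $C_N$.

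First I would upgrade Theorem \ref{thm:consistency_test_statistic} to a uniform statement: define the event $A_T$ on which $\widehat{D}(\omega) \leq \epsilon$ for every $\omega \in C_{N,2}$ and $\widehat{D}(\omega_j) \geq L_j - \epsilon$ for every $\omega_j \in P_K$, where $L_j > 0$ denotes the deterministic limit in Theorem \ref{thm:consistency_test_statistic}(b). Since $C_N$ has cardinality polynomial in $T$ while the proof of Theorem \ref{thm:consistency_test_statistic} (via Chebyshev's inequality applied to the asymptotic mean and variance of $\widehat{D}$) yields polynomial-order tail bounds, a union bound over the grid shows $P(A_T) \to 1$. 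Choose $\epsilon$ small enough that $L_j - \epsilon$ strictly exceeds the threshold lower bound $B$ from \eqref{eq:test_threshold_description} for every $j$, while keeping $\epsilon < B$; this is feasible provided $B$ is calibrated below $\min_j L_j$, which is the practical content of the threshold assumption.

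Next, on $A_T$ I would proceed by induction over iterations. At iteration $j \leq K$, every remaining true partition point $\omega_i$ satisfies $\widehat{D}(\omega_i) \geq L_i - \epsilon > \epsilon \geq \max_{\omega \in C_{N,2}} \widehat{D}(\omega)$, so the argmax $\omega^{*}$ must lie in a $W/N$-neighborhood of some undetected $\omega_i$. The bootstrap procedure of Section \ref{sec:bootstrap_sections} draws resamples from a process whose demeaned spectral matrix is identically zero in $\omega$; a conditional application of the reasoning behind Theorem \ref{thm:consistency_test_statistic}, using the consistency of $\widehat{\Gamma}_{X,0}$ and hence of $\widehat{\sigma}(u)$, shows that $\widehat{D}^{(r)}(\omega) = o_p(1)$ uniformly in $\omega$ under the bootstrap law. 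Consequently the bootstrap p-value at $\omega^{*}$ vanishes and the test declares the peak significant. The $c$-separation in Assumption \ref{as:separated} ensures that excising a $W/N$-neighborhood of $\omega^{*}$ leaves all other true partition points in $\widetilde{C}_N$, so the induction continues. After $K$ iterations, every candidate in $\widetilde{C}_N$ lies in $C_{N,2}$, hence $\widehat{D}(\omega) \leq \epsilon < B \leq \lim \eta_{T,\omega}$ and no further point is declared significant; the procedure halts at $\widehat{K} = K$ on $A_T$, so that $P(\widehat{K} \neq K) \leq P(A_T^{c}) \to 0$.

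The principal obstacle is the bootstrap step: one must show that $\widehat{D}^{(r)}(\omega)$, built from the resamples $\widehat{\sigma}(t/T) Z_t^{(r)}$, concentrates at zero uniformly in $\omega$ and uniformly over the data-generating randomness. This reduces to (i) a uniform-in-$u$ consistency rate for the kernel estimator $\widehat{\Gamma}_{X,0}(u)$ with bandwidth $h = T^{-0.3}$, and (ii) an analog of Theorem \ref{thm:consistency_test_statistic} applied to the bootstrapped locally stationary series, whose time-varying transfer function is the plug-in $\widehat{\sigma}(\cdot)$ rather than the true $\sigma(\cdot)$. Coupling these cleanly, so that the bootstrap critical value inherits the correct asymptotic order under both the null and alternative regimes, is the technical heart of the argument; the remaining steps — uniformization, induction, and bookkeeping of the excised neighborhoods — are routine once this bootstrap consistency is in hand.
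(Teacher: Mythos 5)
Your outline is sound and its skeleton coincides with the paper's proof: both arguments reduce the claim to showing (i) $P\bigl(\bigcup_{\omega \in C_{N,2}} \{\widehat{D}(\omega) > \eta_{T,\omega}\}\bigr) \to 0$ and (ii) $P\bigl(\bigcap_{\omega \in P_K} \{\widehat{D}(\omega) > \eta_{T,\omega}\}\bigr) \to 1$, the first via a union bound over the grid (the paper uses Markov's inequality, $P(\widehat{D}(\omega) > \eta_{T,\omega}) \le E(\widehat{D}(\omega))/\eta_{T,\omega}$, where you propose Chebyshev) and the second directly from Theorem \ref{thm:consistency_test_statistic}(b) together with the standing assumption $\lim_{T\to\infty}\eta_{T,\omega} \ge B > 0$. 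Where you diverge is in scope: the paper's proof is formulated entirely in terms of the abstract deterministic threshold $\eta_{T,\omega}$ from \eqref{eq:test_threshold_description} and never engages with the bootstrap critical value, whereas you correctly identify that if the test is actually run with the bootstrap p-value of Section \ref{sec:bootstrap_sections}, one must additionally prove that the bootstrap statistic $\widehat{D}^{(r)}(\omega)$ concentrates at zero uniformly (so that the data-driven threshold behaves like a valid $\eta_{T,\omega}$); you flag this as the technical heart but do not supply it, and neither does the paper. Your explicit induction over iterations of the search also goes beyond the paper, which proves only the two probability statements above and leaves implicit the bookkeeping that the excised $W/N$-neighborhoods convert those statements into $\widehat{K}=K$; note that your induction still glosses over the case where the argmax $\omega^{*}$ falls in $C_{N,1}\setminus P_K$ (a near-neighbor of a true partition point, about which Theorem \ref{thm:consistency_test_statistic} is silent), so the excision argument needs a little more care to rule out double-counting a single true partition point across iterations. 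In short: same core mechanism, with your version more honest about what remains to be proved (bootstrap validity, iteration bookkeeping) and the paper's version more economical by assuming the threshold's asymptotic behavior outright.
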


\begin{proof}
See Appendix for details of the proof.
\end{proof}

\subsection{Choice of W: length of neighborhood of frequencies}
\label{sec:choice_of_W}
The choice of $W$ used to estimate the discrepancy measure in \eqref{eq:main_discrepance_measure_estimated} depends on the nature and magnitude of changes in the frequency space.  Smaller changes need larger values of $W$ for detection while larger changes can be identified even with smaller values of $W$. In order to detect partition points associated with both small and large changes, we adopt a multiscale approach \citep{messer2014}. In practice, a sequence of $q$ choices for $W$ given by $W_{min} < W_1<W_2,...<W_q < W_{max}$ is considered. Let $\widetilde{P}_i$ denote the set of partition points estimated using the iterative procedure from Section \ref{sec:detecting_number_locations_cpts} with neighborhood length choice $W_i$. Set $P=\widetilde{P}_1$, where $P$ denotes the final set of estimated change points returned by our multiscale approach. For any point $\omega \in \widetilde{P}_2$, $\omega$ is added to the set $P$ only if it does not belong to a $W_2$-neighborhood of any of the existing points in the set $P$. The procedure is successively moved forward until all choices for $W$ have been considered. Algorithm \ref{algo:multipleW} provides the pseudocode that illustrates the full implementation of the multiscale frequency band estimation procedure. 

\begin{algorithm}[ht!]
	\DontPrintSemicolon
	
	$\widetilde{C}_N \gets C_N = \left\{ \frac{W_1}{N} , \frac{W_1+1}{N} , \hdots , \frac{1}{2} - \frac{W_1}{N}  \right\}$
	
	$\widetilde{P} \gets \emptyset$
 
		    
	
\For{$W \in \{W_1,W_2,\ldots,W_q\}$} 
{%
    
    $\widetilde{C}_N \gets \widetilde{C}_N \setminus \{\frac{W_1}{N}, \ldots,\frac{W}{N},\frac{1}{2} - \frac{W}{N},\ldots, \frac{1}{2} - \frac{W_1}{N}\}$
    
    \lFor{$\lambda \in \widetilde{P}$}{$\widetilde{C}_N \gets \widetilde{C}_N \setminus  \{ \lambda - \frac{W}{N}, \hdots , \lambda + \frac{W}{N} \}$}
    
 $\mathrm{stop} \gets 0$
   
    \While{$\mathrm{stop}=0 \; \mathrm{and} \; \widetilde{C}_N \ne \emptyset$}{
 
        $\omega^{*} \gets \argmax_{\lambda \in \widetilde{C}_N}  \widehat{D}(\lambda) \; \mathrm{where} \; \widehat{D}(\lambda) \; \mathrm{is} \; \mathrm{calculated} \; \mathrm{by} \; \eqref{eq:main_discrepance_measure_estimated} \; \mathrm{given} \; W$

    Determine $p$-value for $\widehat{D}(\omega^*)$ using bootstrap procedure (see Section \ref{sec:bootstrap_sections})

 \uIf{$p\mathrm{-value} \; \mathrm{is} \; \mathrm{significant}$}{

            $\widetilde{P} \gets \widetilde{P} \bigcup \{ \omega^{*} \} $ 
            
            $\widetilde{C}_N \gets \widetilde{C}_N \setminus  \{ \omega^{*} - \frac{W}{N}, \hdots , \omega^{*} + \frac{W}{N} \}$}
        \lElse{$\mathrm{stop} \gets 1$}
    }
}
$\widehat{K} = |\widetilde{P}|$ 

\Return{$\widetilde{P}, \widehat{K}$}
\caption{{\sc Multiscale Frequency Band Estimation}}
\label{algo:multipleW}    
\end{algorithm}

 It should be noted that $W_{min}$ should be selected small enough to ensure partition points associated with more subtle changes are detected, but not too small, which may lead to false positives.  
In finite sample cases in Section \ref{sec:simulations}, we present the performance results for different choices of $W_{min}$.  Based on our simulation results, $W_{min} = N/8$ provides the best estimation performance for the simulation settings considered herein.


 In scenarios where a single frequency neighborhood length choice $W$ must be selected, one can take the largest value of $W \in \{ W_1,W_2,...,W_q \}$ for which there is an addition of a partition point to the set $P$ in the iterative procedure described above. More precisely $W = W_{i^*}$, where $i^{*}$ is the largest value in the set $\{ 1,2,...,q \}$ for which the iterative procedure described above adds a point to the set $P$ during iteration $i^*$ (i.e., with neighborhood length choice $W_{i^{*}}$). In case there is no point added to the set $P$ for any choice $W_i$, $i= 1,2,...,q$, we set $W=W_q$. 

\subsection{Finding components responsible for partition points} 
\label{sec:components_partition_point}
 In this section, we present a new technique to identify the components and cross-components of the multivariate series that significantly contribute to each of the partition points identified in the frequency space. For every identified partition point, a resampling procedure for finding the components significantly contributing to the change characterized by the frequency partition point is discussed. The proposed approach, similar to the bootstrap method given in Section \ref{sec:bootstrap_sections}, generates samples under the null hypothesis, and results in p-values for every component and cross-component of the  series $X_{t,T}$.  

 Let $\omega_c$ be a partition point detected by our method. With every component $(a,b)$, where $1\leq a \leq b \leq p $, the goal is to estimate the p-value corresponding to the null hypothesis that component $(a,b)$ of the series $X_{t,T}$ does not have a significant contribution to the partition at frequency $\omega_c$. The component-specific test statistic $\widehat{D}_{(a,b)}$ is then written as 
\begin{equation}\label{eq:component_specific_discrepance_measure_estimated}
\widehat{D}_{(a,b)}(\omega_c) = \frac{1}{T} \sum_{t=1}^T \frac{1}{W} \sum_{k=1}^{W} \left| \widehat{g}_{a,b} \Big( t/T , \omega_c - \lambda_k \Big) - \widehat{g}_{a,b} \Big( t/T , \omega_c + \lambda_k \Big) \right|^2, 
\end{equation}  
where $\widehat{g}( t/T , \omega_c - \lambda_k ) = I_N( t/T , \omega_c - \lambda_k ) - \frac{1}{T} \sum_{t_1=1}^{T} I_N( \frac{t_1}{T} , \omega_c - \lambda_k ) $ and $\lambda_k = \frac{k}{N}, \; k=1,2,\hdots,W$. Note that $\widehat{g}_{a,b}$ is component $(a,b)$ of the $p \times p$ matrix $\widehat{g}(\cdot)$.  The p-value is obtained through the following steps. 

\begin{itemize}
    \item[Step 1.] Assume $X_{t,T} = \sigma(t/T) Z_t$, where $Z_t$ is i.i.d. $N(0,I_p)$. Compute the time-varying variance matrix estimator as 
    \begin{equation}\label{eq:tv-variance-matrix-estimator}
    \widehat{\Gamma}_{X,0} (u) = \frac{1}{T} \sum_{t=1}^T X_t X_t^{'} K_h(u - t/T),    
    \end{equation}
    where $K_h(u)  = \frac{1}{h}K(\frac{u}{h})$, $h$ denotes the bandwidth, and $K(\cdot)$ is a symmetric kernel function which integrates to 1. 
    
    \item[Step 2.] Compute the time-varying square root matrix $\widehat{\sigma}(u) = \Big( \widehat{\Gamma}_{X,0}(u) \Big)^{1/2} $. 
    
    \item[Step 3.] Obtain $R$ bootstrap resamples as $X_{t,T}^{(r)} = \widehat{\sigma}(t/T) Z_t^{(r)}$, where $Z_t^{(r)} \sim N(0,I_p)$, $t=1,2,\hdots,T$, and $r=1,2,\hdots R$.
    
    \item[Step 4.] With each resample $X_{t,T}^{(r)}$, $t=1,2,\hdots,T$, compute the component-specific test statistic $\widehat{D}^{(r)}_{(a,b)}(\omega_c)$. 
    
    \item[Step 5.] Obtain the p-value of this test as $\frac{ \sum_{r=1}^R \mathbbm{1}_{ \widehat{D}_{(a,b)}^{(r)}(\omega_c) > \widehat{D}_{(a,b)}(\omega_c)} }{R}$, where $\widehat{D}_{(a,b)}(\omega)$ is the observed value of the test statistic. 
    
\end{itemize}

 The above procedure is applied to every component and cross-component $(a,b)$, $1 \leq a \leq b \leq p$, of the series $X_{t,T}$. The components and cross-components that carry a significant p-value are deemed as the components responsible for the partition at frequency $\omega_c$. An illustration of this procedure can be seen in the application presented in Section \ref{sec:application}. In order to account for simultaneous testing of multiple components, multiple testing adjustments can and should be used to control the experiment-wide error rate;  for example, a Bonferroni adjustment is implemented for the application presented in Section \ref{sec:application}.

\section{Simulation study}
\label{sec:simulations}

 Performance of the proposed method is assessed through a few simulation examples. The five simulation schemes are described first followed by the presentation of the performance results.

 The first scheme (WN1B) is a multivariate white noise model with no partition points in the frequency space. This setting is considered to ensure that the method does not produce an unreasonable number of false positives.  The second scheme (L3B) considers partition points at frequencies 0.15 and 0.35, with spectral density $f_2(u,\omega)$ exhibiting a linear trend in time $u \in (0,1)$. The third scheme (S3B) also considers partition points at frequencies 0.15 and 0.35, but with spectral density $f_3(u,\omega)$ exhibiting a non-linear trend in time.  The second and third schemes illustrate performance of the method in capturing partition points associated with both linear and nonlinear time-varying dynamics.
The fourth scheme (M3B-1), with partition points at frequencies 0.15 and 0.35, considers a mixture of time series exhibiting linear and non-linear trends from the models L3B and S3B, respectively. This setting considers performance of the method when components of the series have a similar frequency band structure, but differing time-varying dynamics across components.
The fifth scheme (M3B-2) again considers a  mixture of linear and non-linear trends in the spectral density, but assumes only $20\%$ of the $p$ components in $X_{t,T}$ contribute to the partition at frequency 0.15, whereas the remaining $80\%$ of the $p$ components contribute to the partition at frequency 0.35.  This setting is the most challenging and represents both differing time-varying dynamics and differing partition points across components.

   \begin{enumerate}
    \setlength\itemsep{0.5pt}

\item {\bf White noise  (WN1B). } $X_{k,t} = z_{1,t+k-1}$ for $k=1,2,\ldots,p$, and $z_{1,t}$ has time-varying spectral density $f_1(u,\omega)$ given by

\begin{equation*}
f_1(u,\omega) = 1 \; \mathrm{for} \; \omega \in
(0,0.5)
\end{equation*}
        
\item{\bf Linear, 3-Bands (L3B).} $X_{k,t} = z_{2,t+k-1}$ for $k=1,2,\ldots,p$, and $z_{2,t}$ has time-varying spectral density $f_2(u,\omega)$ given by

\begin{equation*}
f_2(u,\omega) = 
\begin{cases}
10-9u \; \mathrm{for}  \; \omega \in (0,0.15)\\
1 \; \mathrm{for}  \; \omega \in [0.15,0.35)\\
1+9u \; \mathrm{for} \; \omega \in [0.35, 0.5)\\
\end{cases}
\end{equation*}

\item{\bf Sinusoidal, 3-Bands (S3B).} $X_{k,t} = z_{3,t+k-1}$ for $k=1,2,\ldots,p$, and $z_{3,t}$ has time-varying spectral density $f_3(u,\omega)$ given by 

\begin{equation*}
f_3(u,\omega) = 
\begin{cases}
10+10\sin(4 \pi u - \pi/2) \; \mathrm{for}  \; \omega \in (0,0.15]\\
5+5\cos(4 \pi u) \; \mathrm{for}  \; \omega \in (0.15,0.35]\\
8.5+8.5\sin(3 \pi u-\pi/16) \; \mathrm{for} \; \omega \in (0.35, 0.5)\\
\end{cases}
\end{equation*}
        
\item{\bf Linear and Sinusoidal, 3-Bands, Mixture (M3B-1).} $X_{k,t} = z_{2,t+k-1}$ for $k=1,2,\ldots,\lfloor p/2 \rfloor$, and $X_{k,t} = z_{3,t+k-\lfloor p/2 \rfloor-1}$ for $k=\lfloor p/2 \rfloor+1,\ldots,p$. Here, the series $z_{2,t}$ and $z_{3,t}$ are given by the schemes L3B and S3B described above.     
        
\item{\bf Linear and Sinusoidal, 3-Bands, Differing Proportions (M3B-2). } $X_{k,t} = z_{4,t+k-1}$ for $k=1,2,\ldots,\floor{0.2p}$, and $X_{k,t} = z_{5,t}$ for $k=\floor{0.2p}+1 \ldots , p$. Here, $z_{4,t}$ and $z_{5,t}$ have  time-varying spectral densities $f_4(u,\omega)$ and $f_5(u,\omega)$, respectively. 

\begin{equation*}
f_4(u,\omega) = 
\begin{cases}
10-9u \; \mathrm{for}  \; \omega \in (0,0.15)\\
1 \; \mathrm{for}  \; \omega \in [0.15,0.5)\\
\end{cases}
\end{equation*}

\begin{equation*}
f_5(u,\omega) = 
\begin{cases}
5+5\cos(4 \pi u) \; \mathrm{for}  \; \omega \in (0,0.35]\\
8.5+8.5\sin(3 \pi u-\pi/16) \; \mathrm{for} \; \omega \in (0.35, 0.5)\\
\end{cases}
\end{equation*}
        
\end{enumerate}

 To assess the performance of the proposed method, we first present the results on estimating the true number of frequency bands, i.e., the quantity $K+1$, where $K$ is the true number of partition points defined in \eqref{eq:g_frequency_partition}. Table \ref{tab:numberbands1} reports the estimated mean number of frequency bands for the five simulation schemes based on 100 replications. Note that the bootstrap procedure from Section \ref{sec:detecting_number_locations_cpts} is used for estimating the number and locations of the partition points. In implementing this procedure, the triangular kernel is used as the kernel choice $K(\cdot)$, and the bandwidth is $h=T^{-0.3}$. The choice for the frequency neighborhood length $W$ involves the multiscale approach described in Section \ref{sec:choice_of_W}, and we consider an equally-spaced sequence $W_{min} = \frac{N}{8} < W_1<W_2,...<W_q < W_{max} = \frac{N}{4}$, with $N=T^{0.7}$, and the sample size $T \in \{ 200,500,1000 \}$. The results in Table \ref{tab:numberbands1} show that as the sample size increases, accuracy in estimating the number of partition points increases for all five simulation schemes. Table \ref{tab:numberbands2} presents estimation results at a fixed sample size $T=1000$, but for different choices of $W_{min}$ used in the multiscale procedure from Section \ref{sec:choice_of_W}. It is seen that in most schemes the number of frequency bands estimated increases as $W_{min}$ decreases, with the best results at parameter choice $W_{min} = N/8$.   

\begin{table}[ht!]
    \centering
   \begin{tabular}{  c c| c|c| c|c |c }
\hline 
$p$ & \multicolumn{1}{c|}{T} & WN1B & L3B & S3B & M3B-1 & M3B-2 \\
\hline
 \multirow{3}{*}{10} 
 &  200  & 1(0) & 2.24(0.55) & 2.15(0.36) & 2.09(0.35)& 2.14(0.64) \\
 & 500   & 1(0) & 2.94(0.34) & 2.56(0.50) & 2.47(0.52)&2.88(0.57) \\
 & 1000 & 1(0) & 3.06(0.24) & 2.96(0.24) & 2.92(0.31)& 3.09(0.35)\\
 \hline
\multirow{3}{*}{15} 
&   200  & 1(0) & 2.34(0.52) & 2.18(0.39) & 2.07(0.29)& 2.25(0.59)\\
 & 500  & 1(0) & 2.99(0.30) & 2.61(0.55) & 2.53(0.52)&2.92(0.51) \\
 & 1000  & 1(0) & 3.05(0.22) & 3.00(0.20) & 2.98(0.20)& 3.12(0.41)\\
\end{tabular}
    \caption{Mean(sd) for estimated number of frequency bands, $\hat{K}+1$, for 100 replications ($W_{min} = N/8$). True value $K+1$ is 1 for WN1B, and 3 for all other schemes.  }
    \label{tab:numberbands1}
\end{table}

\begin{table}[ht!]
    \centering
   \begin{tabular}{  c c| c|c| c|c |c }
\hline 
$p$ & \multicolumn{1}{c|}{$W_{min}$} & WN1B & L3B & S3B & M3B-1 & M3B-2 \\
\hline
 \multirow{3}{*}{10} 
 & $N/8$  & 1(0) & 3.06(0.24) & 2.96(0.24) & 2.92(0.31)& 3.09(0.35) \\
 & $N/10$ & 1(0) & 3.64(0.64) & 3.43(0.52)  & 3.13(0.34)& 3.77(0.47)\\
 & $N/12$ & 1.07(0.26) & 3.36(0.58) & 4.52(0.56) & 4.11(0.71)& 4.26(0.66)\\
 \hline
\multirow{3}{*}{15} 
 & $N/8$ & 1(0) & 3.05(0.22) & 3.00(0.20) & 2.98(0.20) & 3.12(0.41)\\
 & $N/10$ & 1(0) & 3.69(0.66) & 3.50(0.54) & 3.22(0.42)& 3.87(0.42)\\
 & $N/12$ & 1.06(0.24)  & 3.44(0.59) &  4.68(0.51) & 4.32(0.63) & 4.41(0.64) \\
\end{tabular}
    \caption{Mean(sd) for estimated number of frequency bands, $\hat{K}+1$, for 100 replications ($T=1000$). True value $K+1$ is 1 for WN1B, and 3 for all other schemes.}
    \label{tab:numberbands2}
\end{table}

 Next, we present the proportion of the 100 replications that result in \textit{correct detection}. At any given replication, a \textit{correct detection}  occurs when the proposed method identifies the  correct number of frequency partition points \emph{and} all estimated partition points are within a distance of $\zeta$ from the true partition point. Table \ref{tab:accuracy} presents the \textit{correct detection} rate based on 100 replications for the four simulation schemes with more than one frequency band. We observe that in almost all cases, as the length of the time series increases, the \textit{correct detection} rate increases. 

The scheme M3B-2 exhibits the lowest \textit{correct detection} rate relative to other settings. Recall that under the M3B-2 setting, only 20\% of the $p$ components of the multivariate series $X_{t,T}$ contribute to the partition at frequency 0.15, and the remaining 80\% of the components contribute to the partition at frequency 0.35. With weaker contribution from components towards the frequency partition point 0.15, the proposed method requires much longer time series ($T$) to improve the \textit{correct detection} rate.  
\begin{table}[ht!]
    \centering
    \begin{tabular}{  c c| c|c| c|c  }
\hline 
$p$ & \multicolumn{1}{c|}{T} &  L3B & S3B & M3B-1 & M3B-2 \\
\hline
 \multirow{3}{*}{10} 
 &  200  & 0.3 & 0.15 & 0.11 & 0.17\\
 & 500   & 0.88 & 0.53 & 0.45 & 0.36\\
 & 1000 &0.94 & 0.93 & 0.90 & 0.42\\
 \hline
\multirow{3}{*}{15} 
&   200  &  0.36 & 0.18 & 0.08 & 0.20\\
 & 500  &  0.91 & 0.52 & 0.51 & 0.41\\
 & 1000  &  0.95 & 0.96 & 0.96 & 0.39\\
\end{tabular}
    \caption{\textit{Correct Detection}. Proportion of 100 replications that correctly estimate the number of   bands, $K+1$, and the distance between the estimated and true partition points is no more than $\zeta$. Here $\zeta = 1/16$ and $W_{min}=N/8$.}
    \label{tab:accuracy}
\end{table}

\begin{table}[ht!]
    \centering
    \begin{tabular}{  c c| c|c| c|c  }
\hline 
$p$ & \multicolumn{1}{c|}{$\zeta$} &  L3B & S3B & M3B-1 & M3B-2 \\
\hline
 \multirow{3}{*}{10} 
 & 1/12 &0.94 &0.93 &0.90 & 0.73\\
 & 1/16 &0.94 &0.93 & 0.90& 0.42\\
 & 1/24 &0.94 &0.93 & 0.90& 0.27\\
 \hline
\multirow{3}{*}{15} 
 & 1/12 &0.95 & 0.96 &0.96 & 0.67\\
 & 1/16 &0.95 &0.96 &0.96 & 0.39\\
 & 1/24 &0.95 &0.96 &0.96 &0.25 \\
\end{tabular}
    \caption{\textit{Correct Detection}. Proportion of 100 replications that correctly estimate the number of   bands, $K+1$, and the distance between the estimated and true partition points is no more than $\zeta$. Here $T=1000$ and $W_{min} = N/8$.}
    \label{tab:accuracy2}
\end{table}

Finally, in Table \ref{tab:accuracy2}, we provide the results on \textit{correct detection} rates for different choices of $\zeta$ and fixed time series length $T=1000$ for the four settings with more than one frequency band. We notice that for the first three simulation schemes, the \textit{correct detection} rate is not sensitive to the choice of $\zeta$. Here again, the scheme M3B-2 sees the lower \textit{correct detection} rates. This can be attributed to the fact that fewer components (only 20\%) are contributing to the partition at frequency 0.15, and hence a much larger sample size is needed to achieve higher \textit{correct detection} rates. 

\section{Application}
\label{sec:application}

To illustrate the usefulness of the proposed method in analyzing multivariate biomedical time series, we turn to frequency band analysis of EEG signals.  Frequency bands are commonly used in the scientific literature to generate summary measures of the EEG signal, so a principled approach to frequency band estimation would be a welcomed development.  

Our analysis considers 16-channel EEG signals from 14 participants in a simple resting-state eyes-open, eyes-closed experimental protocol \citep{gregoire_cattan_2018_2348892}.  For computational efficiency, signals are standardized and downsampled to a sampling rate of 64 Hz and subset to include 5 consecutive blocks alternating between eyes-closed and eyes-open conditions lasting approximately 15 seconds per block. This  produces, for each participant, a time series approximately 4800 observations in length.  For illustration, Figure \ref{fig:intro_figure} displays  standardized EEG time series from the Oz occipital channel in participants 2 and 13.  The time-varying behavior can be witnessed around the 10 Hz frequency, which corresponds to the traditional Alpha frequency band (8-12 Hz) and is associated with the alternating eyes-closed and eyes-open conditions.  

In practice, the detection of Alpha waves is a useful indicator 
 of stress levels, concentration, relaxation, or mental load \citep{banquet1973spectral,antonenko2010using}.  However, Alpha power may vary in both its peak frequency and range of frequencies
across participants \citep{Doppelmayretal1998,eeg-peak-alpha-freq-changes}, so a data-driven approach is essential for accurately characterizing Alpha power across different individuals.  To illustrate this, Figure \ref{fig:partitionpoints} displays the estimated frequency partition points using the proposed method applied on each participant's EEG signal  separately.  A triangular kernel is used as the kernel choice $K(\cdot)$ with bandwidth $h=T^{-0.3}$, and the multiscale approach described in Section \ref{sec:choice_of_W} was used considering a sequence of equally-spaced values $W_{min} = \frac{N}{12} < W_1 < W_2 < W_3 < W_4 < W_5 < W_{max} = \frac{N}{4}$, with $N$ = $T^{0.7}$.

\begin{figure}[ht!]
    \centering
    \includegraphics[width=0.75\textwidth]{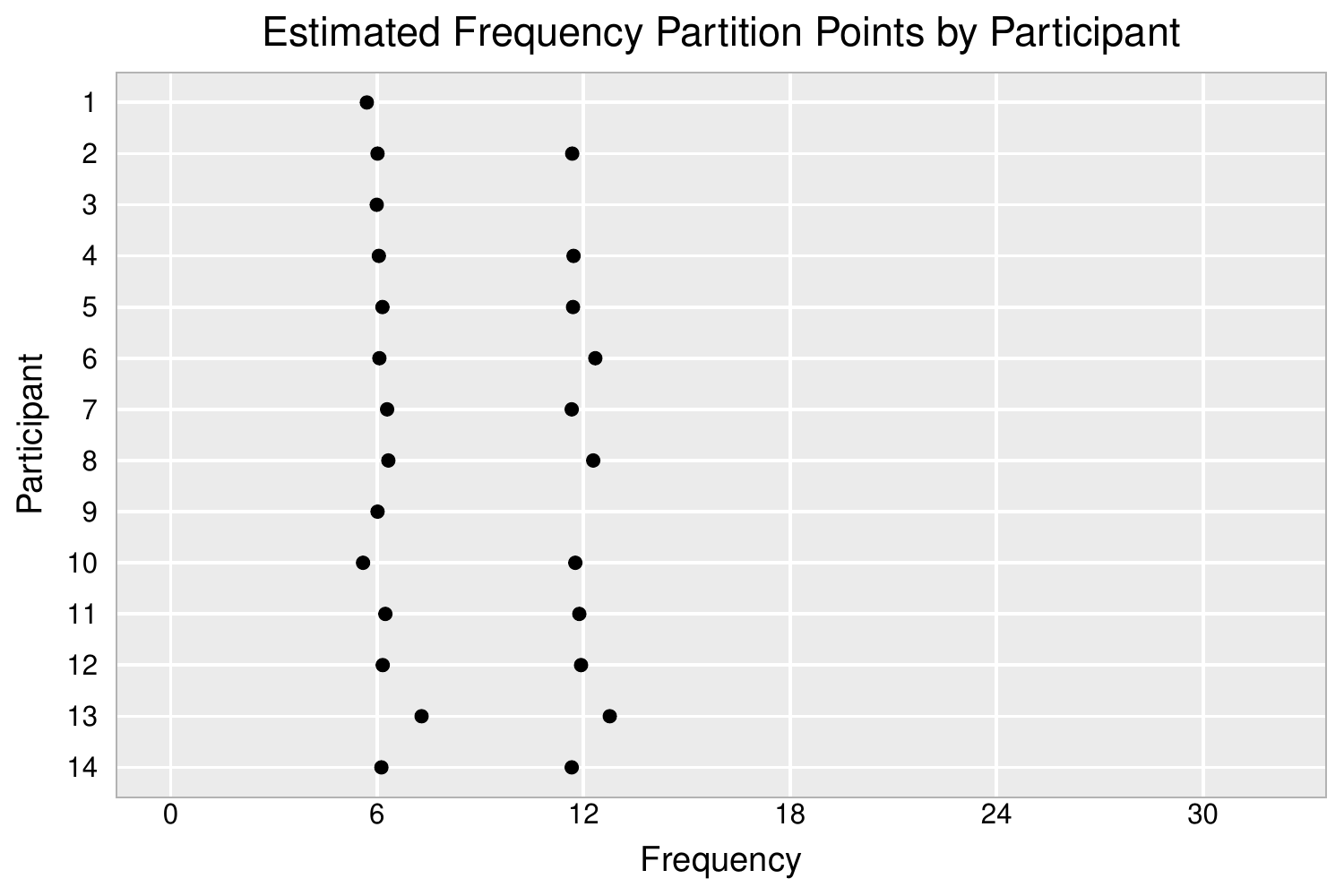}
    \caption{Frequency partition points estimated using proposed methodology.}
    \label{fig:partitionpoints}
\end{figure}

While frequency bands identified are similar across participants, some participants (e.g. Participant 13) exhibit Alpha power in a slightly higher band of frequencies than the conventional 8-12 Hz band used in practice, which demonstrates the advantage of the proposed data-driven frequency band estimator.  For further illustration, Figure \ref{fig:periodogram_p13} displays the estimated spectral density for Participant 13 in two channels, one from the occipital region (Oz) and another from the parietal region (P7), along with the estimated frequency partition points (green lines).   It is not surprising to find that the time-varying behavior in this band is prominent for these two channels, since the parietal and occipital brain regions are known to exhibit strong Alpha band power \citep{pfurtscheller1996event}.

\begin{figure}
    \centering
   \begin{subfigure}[b]{0.48\textwidth}
       \includegraphics[width=\textwidth,trim={0.5cm 0.6cm 1cm 0.75cm},clip]{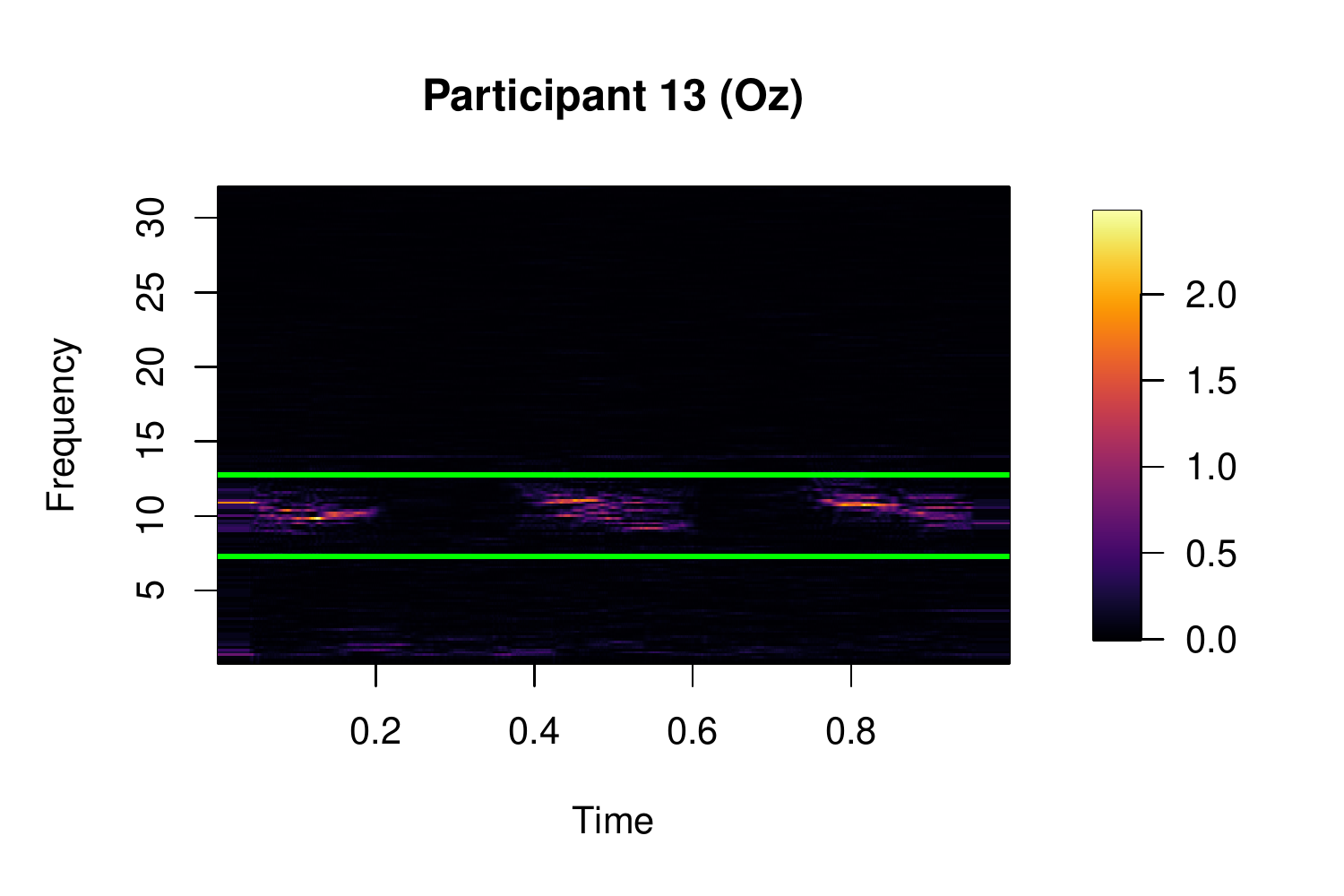}
    \caption{}
    \label{fig:periodogram_p13_oz}
   \end{subfigure}
   \begin{subfigure}[b]{0.48\textwidth}
          \includegraphics[width=\textwidth,trim={0.5cm 0.6cm 1cm 0.75cm},clip]{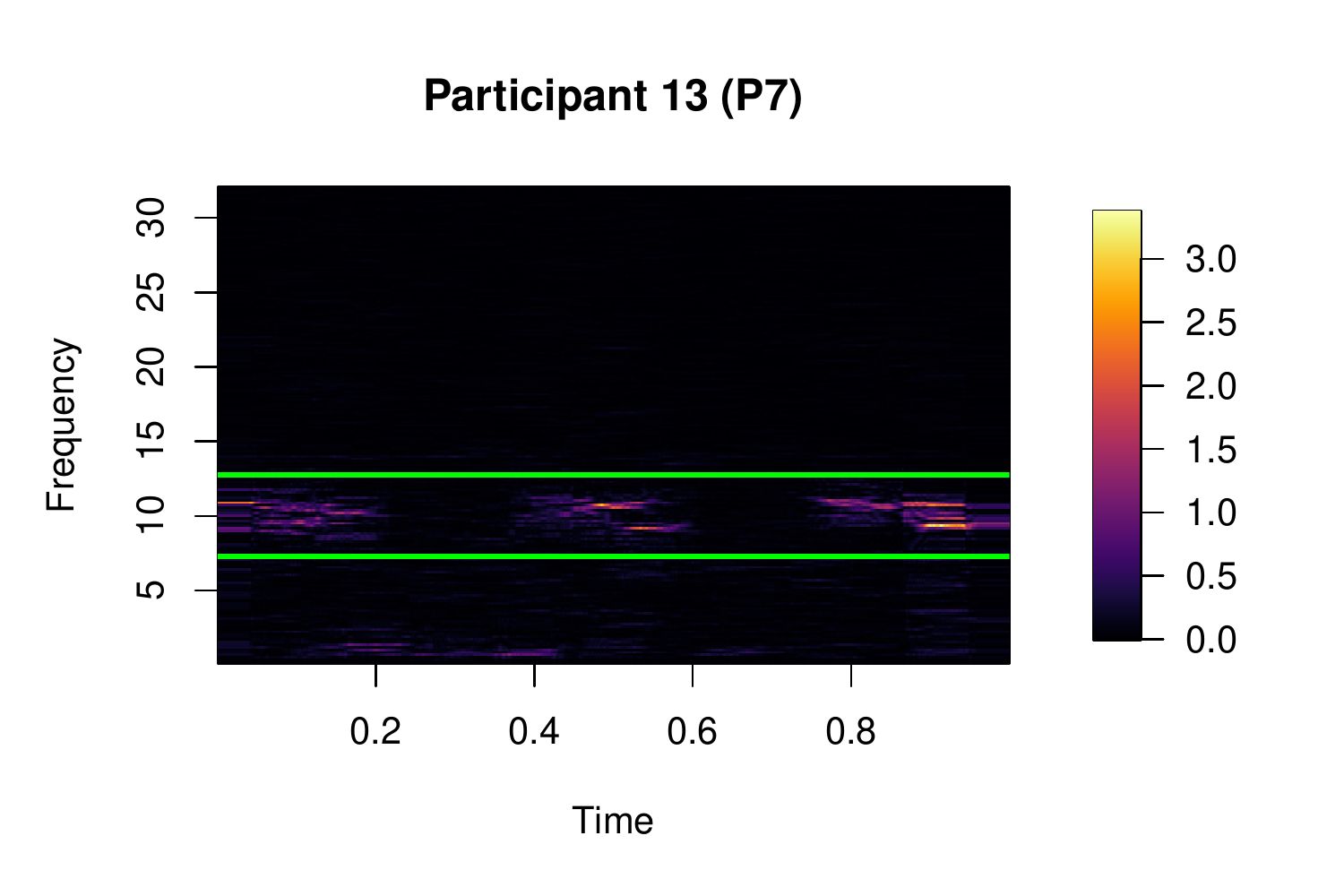}
              \caption{}
    \label{fig:periodogram_p13_p7}
   \end{subfigure}
    \caption{Local periodogram for two components and estimated frequency bands.}
    \label{fig:periodogram_p13}
\end{figure}

Next, in addition to identifying the frequency bands, the proposed method also provides a data-driven approach to identifying components and cross-components of the multivariate signal that are significantly associated with each identified  frequency partition point.  Applying the bootstrap technique described in Section \ref{sec:components_partition_point}, we can identify components that significantly contribute to the upper and lower Alpha power frequency partition points identified by the proposed method for Participants 2 and 13 (see Figure \ref{fig:contributions_p2p13}).  

\begin{figure}
    \centering
    \includegraphics[width=0.95\textwidth]{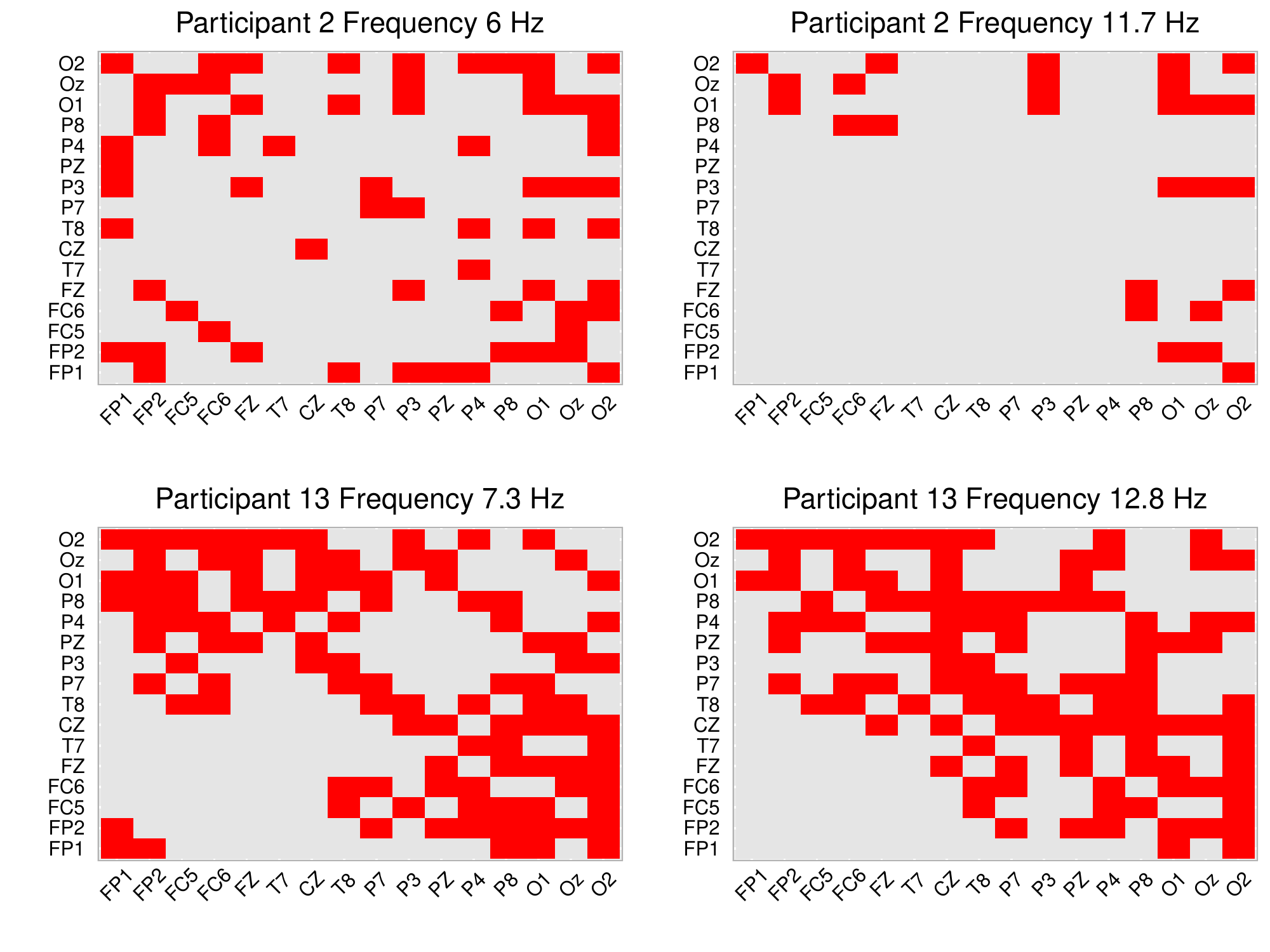}
    \caption{Components significantly contributing to estimated frequency partition point (red) for two participants and two partition points.}
    \label{fig:contributions_p2p13}
\end{figure}

Unsurprisingly, we find that channels from the parietal (P7, P3, Pz, P4, P8) and occipital (O1, Oz, O2) regions play a major role in establishing both the upper and lower bounds of the Alpha band.  However, there are two additional findings of interest.  First, in differentiating the time-varying dynamics of lower frequency power vs. Alpha band power (i.e., components responsible for frequency partition points at 6 Hz and 7.3 Hz for Participants 2 and 13 respectively), the prefrontal channels (FP1, FP2) are also involved.  The prefrontal region has been shown to be the dominant source of power in lower frequencies ($<$ 4 Hz) during both eyes-open and eyes-closed resting state conditions. Further, low frequency power has been shown to increase from the eyes-closed to eyes-open condition \citep{chen2008eeg}.  Since the prefrontal region is not meaningfully involved in Alpha power fluctuations, 
the differences in the time-varying dynamics for lower frequencies vs. Alpha power frequencies in the prefrontal region are shown to be significant using the proposed method. More precisely, our method shows that the prefrontal region contributes towards the identification of the lower Alpha band frequency partition points (6 Hz and 7.3 Hz for Participants 2 and 13 respectively), but not the upper Alpha band frequency partition points (11.7 Hz and 12.8 Hz for Participants 2 and 13
respectively).  Second, the cross-components between the parietal and occipital region channels and the prefrontal, frontal, and central region channels contribute differently to the lower and upper Alpha band frequency partition points.  This suggests that interaction components between the anterior and posterior brain regions also exhibit different time varying behavior in the lower and higher frequencies surrounding the Alpha band. 

Across all participants, Figure \ref{fig:contributions_secondcutpoint} displays the average component-wise p-values for testing the significance of the contribution towards the lower and upper Alpha band frequency partition points that were identified.  Here we can see more clearly that the lower frequency partition point, which separates the lower frequencies from the Alpha band, can be attributed to the prefrontal components and the interaction components between the prefrontal and parietal/occipital regions.  However, the upper frequency partition point separating the Alpha band from higher frequencies is associated with the parietal and occipital components and cross-components.  These findings are made possible by the proposed method that is uniquely able to identify frequency partition points and also the corresponding sets of significant components and cross-components associated with each partition point. 

\begin{figure}
    \centering
    \begin{subfigure}[b]{0.45\textwidth}
    \includegraphics[width=\textwidth]{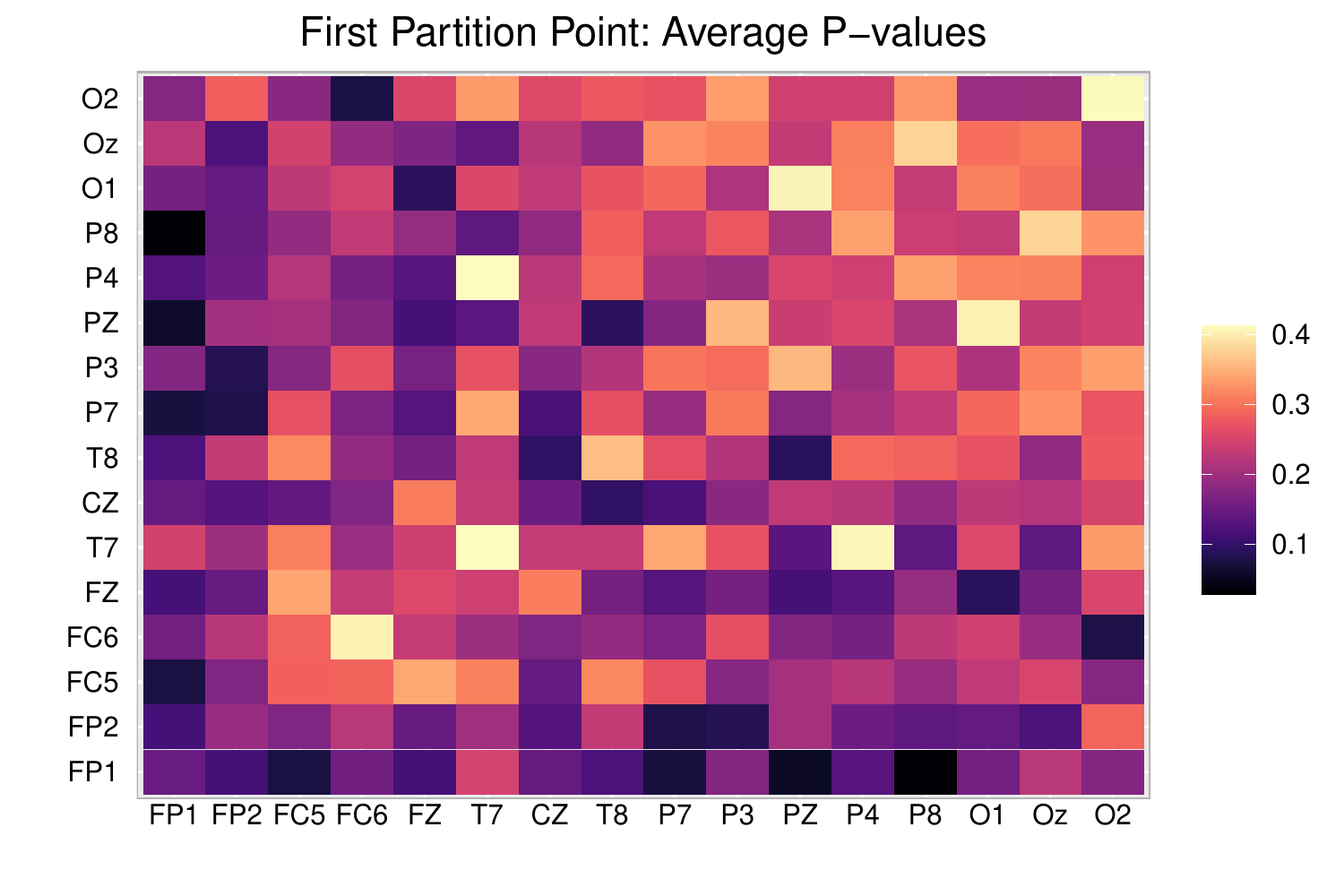}
         \caption{}
    \label{fig:contributions_firstcutpoint}
    \end{subfigure}
        \begin{subfigure}[b]{0.45\textwidth}
    \includegraphics[width=\textwidth]{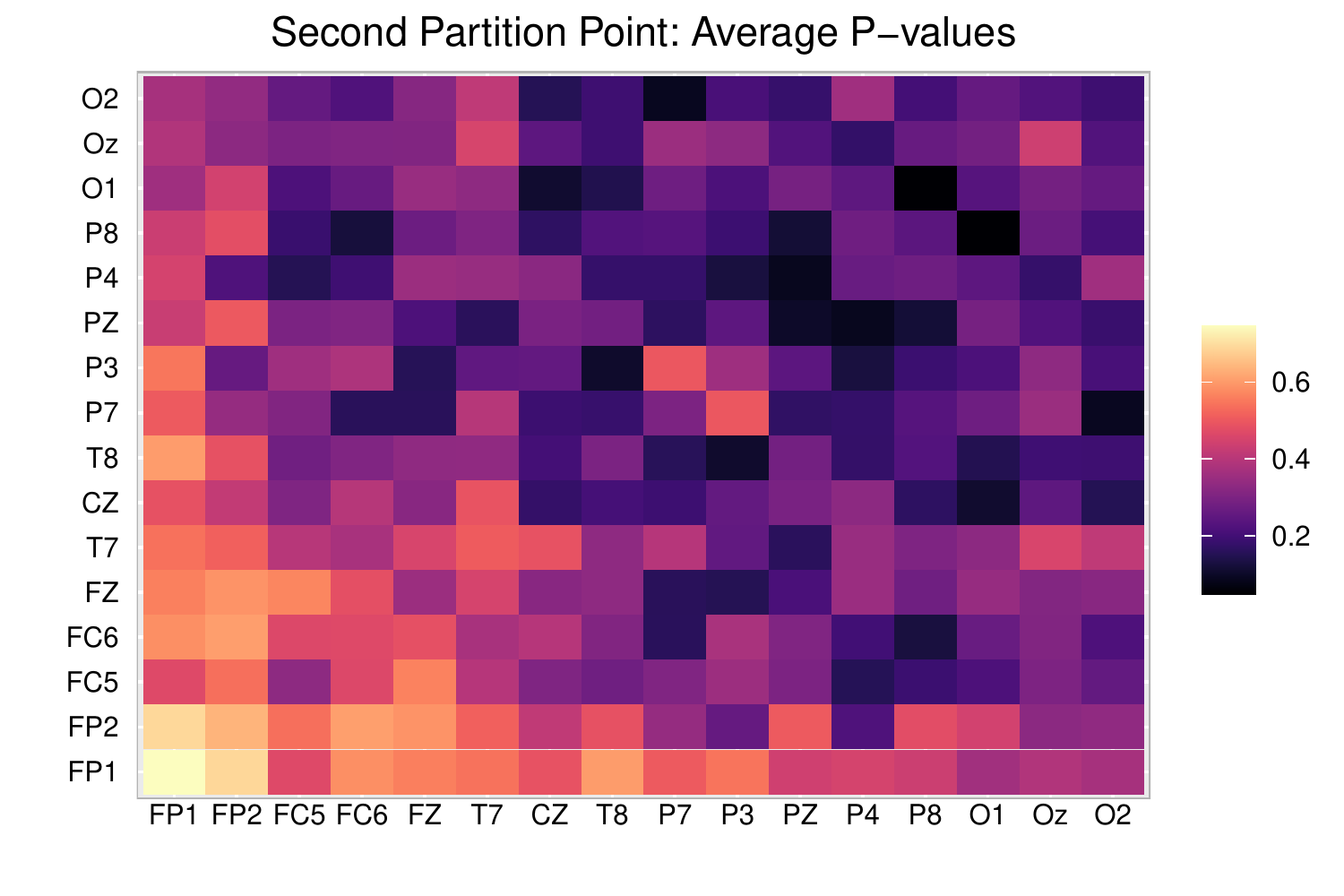}
         \caption{}
  \label{fig:contributions_secondcutpoint}
    \end{subfigure}
         \caption{Component-wise significance of contribution towards partition points}
  \label{fig:contributions_secondcutpoint}
\end{figure}

\section{Concluding remarks}
\label{s:conclusion}

The frequency band analysis framework introduced in this article offers a quantitative approach to identifying frequency bands that best preserve the nonstationary dynamics of the underlying multivariate time series.  This framework allows for estimation of both the number of frequency bands, their corresponding frequency partition points, and the components and cross-components of the multivariate signal associated with each of the partition points.  This is made possible by the development of a sensible discrepancy measure and computationally efficient bootstrap testing procedure within an iterative search algorithm.  However, the proposed method is not without limitations. 
Motivated by the application to EEG frequency band analysis, it would be interesting to extend this framework to directly consider multiple subjects and produce a single set of frequency bands that jointly characterizes the time-varying dynamics of the collection of signals in a data-driven manner.  Second, in order to extend the proposed methodology for analyzing high-dimensional EEG signals (64 to 512 channels), the bootstrap testing procedure would need to be modified to accommodate high-dimensional covariance structures.  This would require appropriate simplifying assumptions on the covariance structure, such as factor-based, sparse, and block covariance structures.  Finally, as seen in the last simulation setting (M3B-2) in Section \ref{sec:simulations}, the current discrepancy measure requires large amounts of data to detect frequency partition points associated with changes in a few components of the multivariate signal.  Modifications of the discrepancy measure that may be more powerful in detecting such changes (e.g. weighted squared $L_2$ norm or $L_{\infty}$ norm) are also worth further investigation and development.

\section*{Funding Statement}
Research
reported in this publication was supported by the National Institute Of General Medical Sciences
of the National Institutes of Health under Award Number R01GM140476. The content is solely the
responsibility of the authors and does not necessarily represent the official views of the National
Institutes of Health. Portions of this research were conducted with the advanced computing resources provided by Texas A\&M High Performance Research Computing.

\section*{Data Availability Statement}
The data underlying this article are publicly available via Zenodo at \url{https://doi.org/10.5281/zenodo.2348892}.

\section*{Supplementary}
\subsection*{\texttt{R} code for ``Frequency Band Analysis of Multivariate Time Series".}
\texttt{R} code, a quick start demo, and descriptions of all functions and parameters needed to generate simulated data introduced in Section \ref{sec:simulations} and to implement the proposed method on data for use in practice can be downloaded from GitHub at this link: \url{https://github.com/sbruce23/mEBA}.

\bibliographystyle{chicago}
\bibliography{EBAbib}

\appendix
\section{Proofs}
\label{s:proofs}

\begin{proof}[\textbf{Proof of Theorem \ref{thm:consistency_test_statistic}}]
We have from \eqref{eq:main_discrepance_measure_estimated} the estimated discrepancy measure given by 
\begin{gather*}
\widehat{D}(\omega) = \frac{1}{T} \sum_{t=1}^T \frac{1}{W} \sum_{k=1}^{W} \sum_{a,b=1}^{p} \Big[ \widehat{g}_{a,b} \Big( \frac{t}{T} , \omega - \lambda_k \Big) - \widehat{g}_{a,b} \Big( \frac{t}{T} , \omega + \lambda_k \Big) \Big] \times \Big[ \widehat{g}_{a,b} \Big( \frac{t}{T} , \omega - \lambda_k \Big) - \\
\widehat{g}_{a,b} \Big( \frac{t}{T} , \omega + \lambda_k \Big) \Big]^{*} \\ \notag
 = \frac{1}{T} \sum_{t=1}^T \frac{1}{W} \sum_{k=1}^{W} \sum_{a,b=1}^{p} \Big[ I_{N,_{a,b}}( \frac{t}{T} , \omega - \lambda_k ) - \frac{1}{T} \sum_{t_1=1}^{T} I_{N,_{a,b}}( \frac{t_1}{T} , \omega - \lambda_k ) -  \\
  I_{N,_{a,b}}( \frac{t}{T} , \omega + \lambda_k ) + \frac{1}{T} \sum_{t_2=1}^{T} I_{N,_{a,b}}( \frac{t_2}{T} , \omega + \lambda_k ) \Big] \times \\ \notag
  \Big[ I_{N,_{a,b}}( \frac{t}{T} , \omega - \lambda_k ) - \frac{1}{T} \sum_{t_1=1}^{T} I_{N,_{a,b}}( \frac{t_1}{T} , \omega - \lambda_k ) -  I_{N,_{a,b}}( \frac{t}{T} , \omega + \lambda_k ) + \frac{1}{T} \sum_{t_2=1}^{T} I_{N,_{a,b}}( \frac{t_2}{T} , \omega + \lambda_k ) \Big]^{*}.
\end{gather*}
where $\widehat{g}_{a,b}$ and $I_{N,a,b}$ denote the entry $(a,b)$ in the respective matrices. Expanding the inside term of the above expression involves several terms  and we consider one of each kind and show that the expected value of the discrepancy measure tends to zero under $H_0$. 

First, we consider the terms of the type $I_N( \frac{t}{T} , \omega \pm \lambda_k )I_N^{*}( \frac{t}{T} , \omega \pm \lambda_k )$. We have, for component $(a,b)$,
\begin{gather*}
I_{N,_{a,b}}( \frac{t}{T} , \omega \pm \lambda_k )I_{N,_{a,b}}^{*}( \frac{t}{T} , \omega \pm \lambda_k ) = J_{N,_{a}}( \frac{t}{T} , \omega \pm \lambda_k )J_{N,_{b}}^{*}( \frac{t}{T} , \omega \pm \lambda_k )J_{N,_{a}}( \frac{t}{T} , \omega \pm \lambda_k )J_{N,_{b}}^{*}( \frac{t}{T} , \omega \pm \lambda_k ) \\ \notag
 = \frac{1}{(2 \pi N)^2} \Big( \sum_{s_1 = 0}^{N-1} X_{ a ,  \floor{u_t T} - N/2 + 1 + s_1,T } \; e^{-i s_1 \theta_{k,\pm}} \Big) \times \Big( \sum_{s_2 = 0}^{N-1} X_{ b ,  \floor{u_t T} - N/2 + 1 + s_2,T } \; e^{i s_2 \theta_{k,\pm}} \Big) \times  \\ \notag 
 \Big( \sum_{s_3 = 0}^{N-1} X_{ a , \floor{u_t T} - N/2 + 1 + s_3,T } \; e^{-i s_3 \theta_{k,\pm}} \Big) \times \Big( \sum_{s_4 = 0}^{N-1} X_{ b ,  \floor{u_t T} - N/2 + 1 + s_4,T } \; e^{i s_4 \theta_{k,\pm}} \Big) \\
  = \frac{1}{ (2 \pi N)^2} \sum_{s_1,s_2,s_3,s_4} \sum_{l,m,n,o = -\infty}^{\infty} \Big( \Phi_a( u_{t_{s_1}} , l) \varepsilon_{t_{s_1} - l} \Big) \times \Big( \Phi_b ( u_{t_{s_2}} , m) \varepsilon_{t_{s_2} - m} \Big) \\ 
  \Big( \Phi_a ( u_{t_{s_3}} , n) \varepsilon_{t_{s_3} - n} \Big) \times \Big( \Phi_b ( u_{t_{s_4}} , o) \varepsilon_{t_{s_4} - o} \Big) \textrm{exp}(-i \theta_{k,\pm} (s_1-s_2+s_3-s_4) ) \; + \; O(\frac{1}{T}) \\
 =  \frac{1}{ (2 \pi N)^2} \sum_{s_1,s_2,s_3,s_4} \sum_{l,m,n,o = -\infty}^{\infty} \Big( Z_{a,t_{s_1} - l}  Z_{b,t_{s_2} - m} Z_{a,t_{s_3} - n} Z_{b,t_{s_4} - o}  \Big)  \textrm{exp}(-i \theta_{k,\pm} (s_1-s_2+s_3-s_4) ) \; + \; O(\frac{1}{T})
\end{gather*}
where $\theta_{k,\pm} = \omega \pm \lambda_k$, $u_t = \frac{t}{T}$ and $t_{s_j} = \floor{u_t T} - N/2 + 1 + s_j$ for $j=1,2,3,4$. Also, $Z_{a,t_{s_1} - l} = \Big( \Phi_a^{'}( u_{t_{s_1}} , l) \varepsilon_{t_{s_1} - l} \Big)$ where $\Phi_a^{'}( u_{t_{s_1}} , l)$ denotes the $a^{th}$ row of the coefficient matrix $\Phi( u_{t_{s_1}} , l)$. For the expectation above, we apply Theorem 2.3.2 of \citet{brillinger81}. Noting that the $Z$ random variables above are Gaussian by the assumption in \eqref{eq:locally_stationary_ts}, the expected values simplifies to 

\begin{gather*}
E \Big( I_{N,_{a,b}}( \frac{t}{T} , \omega \pm \lambda_k )I_{N,_{a,b}}^{*}( \frac{t}{T} , \omega \pm \lambda_k ) \Big) = E_{1,T}^{(a,b)} + E_{2,T}^{(a,b)} +  O(\frac{1}{N}) + O(\frac{1}{T}) + O(\frac{N^2}{T^2}),
\end{gather*} 
where 
\begin{gather*}
 E_{1,T}^{(a,b)} = \frac{1}{ (2 \pi N)^2} \sum_{s_1,s_2,s_3,s_4} \sum_{l,m,n,o = -\infty}^{\infty} \textrm{exp}(-i \theta_{k,\pm} (s_1-s_2+s_3-s_4) )   E \Big[ \Big( \Phi_a^{'}( u_{t} , l) \varepsilon_{t_{s_1} - l} \Big) \times \Big( \Phi_b^{'}( u_{t} , m) \varepsilon_{t_{s_2} - m} \Big) \Big] \times  \\
 E \Big[ \Big( \Phi_a^{'}( u_{t} , n) \varepsilon_{t_{s_3} - n} \Big) \times \Big( \Phi_b^{'}( u_{t} , o) \varepsilon_{t_{s_4} - o} \Big) \Big]  \\
E_{2,T}^{(a,b)} = \frac{1}{ (2 \pi N)^2} \sum_{s_1,s_2,s_3,s_4} \sum_{l,m,n,o = -\infty}^{\infty} \textrm{exp}(-i \theta_{k,\pm} (s_1-s_2+s_3-s_4) )  E \Big[ \Big( \Phi_a^{'}( u_{t} , l) \varepsilon_{t_{s_1} - l} \Big) \times \Big( \Phi_b^{'}( u_{t} , o) \varepsilon_{t_{s_4} - o} \Big) \Big] \times \\
 E \Big[ \Big( \Phi_a^{'}( u_{t} , n) \varepsilon_{t_{s_3} - n} \Big) \times \Big( \Phi_b^{'}( u_{t} , m) \varepsilon_{t_{s_3} - m} \Big) \Big]
\end{gather*}

For $E_{1,T}^{(a,b)}$ it can be seen that the expectations are non-zero only when $t_{s_1}-l=t_{s_2}-m$ and $t_{s_3}-n=t_{s_4}-o$. We hence have   
\begin{gather} \label{eq:e1_term}
\frac{1}{T} \sum_{t=1}^T \frac{1}{W} \sum_{k=1}^{W} E_{1,T}^{(a,b)} = \frac{1}{T} \sum_{t=1}^T \frac{1}{W} \sum_{k=1}^{W} f_{a,b}(\frac{t}{T} , \theta_{k,\pm}) f_{a,b}(u_t , \theta_{k,\pm})^{*} + o(1). 
\end{gather}
Similarly, for  the term $E_{2,T}^{(a,b)}$ it can be seen that the expectations are non-zero only when $t_{s_1}-l=t_{s_4}-o$ and $t_{s_3}-n=t_{s_2}-m$. We get   
\begin{gather} \label{eq:eq:e2_term}
\frac{1}{T} \sum_{t=1}^T \frac{1}{W} \sum_{k=1}^{W} E_{2,T}^{(a,b)} = \frac{1}{T} \sum_{t=1}^T \frac{1}{W} \sum_{k=1}^{W} f_{a,b}(\frac{t}{T} , \theta_{k,\pm}) f_{a,b}(u_t , \theta_{k,\pm})^{*} + o(1). 
\end{gather}
where $\theta_{k,\pm} = \omega \pm \lambda_k$, $u_t = t/T$, $f_{a,b}(u_t , \theta_{k,\pm})$ denotes component $(a,b)$ of the spectral matrix $f(u_t , \theta_{k,\pm})$ and $f_{a,b}(u_t , \theta_{k,\pm})^{*}$ is the conjugate transpose. The terms in \eqref{eq:e1_term} and \eqref{eq:eq:e2_term} are approximated by $\frac{1}{2 \pi c}\int_0^1  \int_{0}^{2 \pi c} f_{a,b}(u,\omega \pm \lambda)f_{a,b}(u, \omega \pm \lambda)^{*} d \lambda \; du + o(1)$. 

Next, we consider the terms of the type $I_N( \frac{t}{T} , \omega \pm \lambda_k )I_N^{*}( \frac{t}{T} , \omega \mp \lambda_k )$. We have, for component $(a,b)$,
\begin{gather*}
I_{N,_{a,b}}( \frac{t}{T} , \omega \pm \lambda_k )I_{N,_{a,b}}^{*}( \frac{t}{T} , \omega \mp \lambda_k ) = 
  \frac{1}{(2 \pi N)^2} \Big( \sum_{s_1 = 0}^{N-1} X_{ a ,  \floor{u_t T} - N/2 + 1 + s_1,T } \; e^{-i s_1 \theta_{k,\pm}} \Big) \times \\ \notag
  \Big( \sum_{s_2 = 0}^{N-1} X_{ b ,  \floor{u_t T} - N/2 + 1 + s_2,T } \; e^{i s_2 \theta_{k,\pm}} \Big) \times  
 \Big( \sum_{s_3 = 0}^{N-1} X_{ a , \floor{u_t T} - N/2 + 1 + s_3,T } \; e^{-i s_3 \theta_{k,\mp}} \Big) \times \Big( \sum_{s_4 = 0}^{N-1} X_{ b ,  \floor{u_t T} - N/2 + 1 + s_4,T } \; e^{i s_4 \theta_{k,\mp}} \Big) \\
 =  \frac{1}{ (2 \pi N)^2} \sum_{s_1,s_2,s_3,s_4} \sum_{l,m,n,o = -\infty}^{\infty} \Big( Z_{a,t_{s_1} - l}  Z_{b,t_{s_2} - m} Z_{a,t_{s_3} - n} Z_{b,t_{s_4} - o}  \Big)  \textrm{exp}(-i \theta_{k,\pm} (s_1-s_2) ) \textrm{exp}(-i \theta_{k,\mp} (s_3-s_4) ) \; +  \\ \notag
 \; O(\frac{1}{T}),
\end{gather*}
where $\theta_{k,\pm} = \omega \pm \lambda_k$, $\theta_{k,\mp} = \omega \mp \lambda_k$. The expected values simplifies to  $E \Big( I_{N,_{a,b}}( \frac{t}{T} , \omega \pm \lambda_k )I_{N,_{a,b}}^{*}( \frac{t}{T} , \omega \pm \lambda_k ) \Big) = E_{3,T}^{(a,b)} + E_{4,T}^{(a,b)} +  o(1)$, where 
\begin{gather*}
 E_{3,T}^{(a,b)} = \frac{1}{ (2 \pi N)^2} \sum_{s_1,s_2,s_3,s_4} \sum_{l,m,n,o = -\infty}^{\infty} \textrm{exp}(-i \theta_{k,\pm} (s_1-s_2) ) \textrm{exp}(-i \theta_{k,\mp} (s_3-s_4) )   E \Big[ \Big( \Phi_a^{'}( u_{t} , l) \varepsilon_{t_{s_1} - l} \Big) \\ \notag
 \times \Big( \Phi_b^{'}( u_{t} , m) \varepsilon_{t_{s_2} - m} \Big) \Big] \times 
 E \Big[ \Big( \Phi_a^{'}( u_{t} , n) \varepsilon_{t_{s_3} - n} \Big) \times \Big( \Phi_b^{'}( u_{t} , o) \varepsilon_{t_{s_4} - o} \Big) \Big]  \\
E_{4,T}^{(a,b)} = \frac{1}{ (2 \pi N)^2} \sum_{s_1,s_2,s_3,s_4} \sum_{l,m,n,o = -\infty}^{\infty} \textrm{exp}(-i \theta_{k,\pm} (s_1-s_4) ) \textrm{exp}(-i \theta_{k,\mp} (s_3-s_2) )  E \Big[ \Big( \Phi_a^{'}( u_{t} , l) \varepsilon_{t_{s_1} - l} \Big) \\ \notag
\times \Big( \Phi_b^{'}( u_{t} , o) \varepsilon_{t_{s_4} - o} \Big) \Big] \times 
 E \Big[ \Big( \Phi_a^{'}( u_{t} , n) \varepsilon_{t_{s_3} - n} \Big) \times \Big( \Phi_b^{'}( u_{t} , m) \varepsilon_{t_{s_3} - m} \Big) \Big]. 
\end{gather*} 

For $E_{3,T}^{(a,b)}$, the expectations are non-zero only when $t_{s_1}-l=t_{s_2}-m$ and $t_{s_3}-n=t_{s_4}-o$. We hence have   
\begin{gather} \label{eq:e3_term}
\frac{1}{T} \sum_{t=1}^T \frac{1}{W} \sum_{k=1}^{W} E_{1,T}^{(a,b)} = \frac{1}{T} \sum_{t=1}^T \frac{1}{W} \sum_{k=1}^{W} f_{a,b}(\frac{t}{T} , \theta_{k,\pm}) f_{a,b}(u_t , \theta_{k,\mp})^{*} + o(1). 
\end{gather}
For  $E_{4,T}^{(a,b)}$, the expectations are non-zero only when $t_{s_1}-l=t_{s_4}-o$ and $t_{s_3}-n=t_{s_2}-m$. We get   
\begin{gather} \label{eq:eq:e4_term}
\frac{1}{T} \sum_{t=1}^T \frac{1}{W} \sum_{k=1}^{W} E_{2,T}^{(a,b)} = \frac{1}{T} \sum_{t=1}^T \frac{1}{W} \sum_{k=1}^{W} f_{a,b}(\frac{t}{T} , \theta_{k,\pm}) f_{a,b}(u_t , \theta_{k,\mp})^{*} + o(1). 
\end{gather}

The remaining types of terms considered are i). $I_N( \frac{t}{T} , \omega \pm \lambda_k ) \Big( \frac{1}{T} \sum_{t_1=1}^{T} I_{N,_{a,b}}( \frac{t_1}{T} , \omega \pm \lambda_k )^{*} \Big)$; ii). $I_N( \frac{t}{T} , \omega \pm \lambda_k ) \Big( \frac{1}{T} \sum_{t_1=1}^{T} I_{N,_{a,b}}( \frac{t_1}{T} , \omega \mp \lambda_k )^{*} \Big)$; iii). $ \Big( \frac{1}{T} \sum_{t_1=1}^{T} I_{N,_{a,b}}( \frac{t_1}{T} , \omega \pm \lambda_k )^{*} \Big) I_N( \frac{t}{T} , \omega \pm \lambda_k )^{*}$; iv). $ \Big( \frac{1}{T} \sum_{t_1=1}^{T} I_{N,_{a,b}}( \frac{t_1}{T} , \omega \pm \lambda_k )^{*} \Big) I_N( \frac{t}{T} , \omega \mp \lambda_k )^{*}$; v). $ \Big( \frac{1}{T} \sum_{t_1=1}^{T} I_{N,_{a,b}}( \frac{t_1}{T} , \omega \pm \lambda_k ) \Big) \times \Big( \frac{1}{T} \sum_{t_1=1}^{T} I_{N,_{a,b}}( \frac{t_1}{T} , \omega \pm \lambda_k )^{*} \Big) $ and vi).  $ \Big( \frac{1}{T} \sum_{t_1=1}^{T} I_{N,_{a,b}}( \frac{t_1}{T} , \omega \pm \lambda_k ) \Big) \times \Big( \frac{1}{T} \sum_{t_1=1}^{T} I_{N,_{a,b}}( \frac{t_1}{T} , \omega \mp \lambda_k )^{*} \Big) $. The treatment of all these types of terms is similar to the approach that lead to \eqref{eq:e1_term}-\eqref{eq:eq:e4_term} for the first two types of terms considered.

When $\omega \in C_{N,2}$, it can be seen that by combining the 16 limiting expressions that we get from all the different types of terms, the expected value of $\widehat{D}(\omega)$ tends to zero. With the same approach, when $\omega \in \{ \omega_1 , \omega_2 , \hdots , \omega_K \}$, the expected value of $\widehat{D}(\omega)$ is approximated by  $\frac{1}{2 \pi c} \int_{0}^1 \int_0^{2 \pi c} \sum_{a,b=1}^p \Big( g_{a,b}(u,\omega - \lambda) - g_{a,b}(u,\omega + \lambda) \Big) \Big( g_{a,b}(u,\omega - \lambda) - g_{a,b}(u,\omega + \lambda) \Big)^{*} \; d \lambda \; du + o(1)$.

Next, we consider the variance of the discrepancy measure. Here we look at the $2^{nd}$ order cumulant of discrepancy measure that is written as
\begin{gather} \label{eq:var_exp_main}
cum(\widehat{D}(\omega) ,\widehat{D}(\omega)) = \frac{1}{T^2} \sum_{t_1,t_2=1}^T \frac{1}{W^2} \sum_{k_1,k_2=1}^{W} \sum_{a,b,c,d=1}^{p}  cum \Bigg( \Big[ I_{N,_{a,b}}( \frac{t_1}{T} , \omega - \lambda_{k_1} ) - \frac{1}{T} \sum_{x_1=1}^{T} I_{N,_{a,b}}( \frac{x_1}{T} , \omega - \lambda_{k_1} ) -  \notag \\
  I_{N,_{a,b}}( \frac{t_1}{T} , \omega + \lambda_{k_1} ) + \frac{1}{T} \sum_{x_2=1}^{T} I_{N,_{a,b}}( \frac{x_2}{T} , \omega + \lambda_{k_1} ) \Big] \times \notag \\ 
  \Big[ I_{N,_{a,b}}( \frac{t_1}{T} , \omega - \lambda_{k_1} ) - \frac{1}{T} \sum_{x_3=1}^{T} I_{N,_{a,b}}( \frac{x_3}{T} , \omega - \lambda_{k_1} ) -  I_{N,_{a,b}}( \frac{t_1}{T} , \omega + \lambda_{k_1} ) + \frac{1}{T} \sum_{x_4=1}^{T} I_{N,_{a,b}}( \frac{x_4}{T} , \omega + \lambda_{k_1} ) \Big]^{*} , \notag \\
  \Big[ I_{N,_{c,d}}( \frac{t_2}{T} , \omega - \lambda_{k_2} ) - \frac{1}{T} \sum_{x_1=1}^{T} I_{N,_{c,d}}( \frac{x_1}{T} , \omega - \lambda_{k_2} ) -  \notag \\
  I_{N,_{c,d}}( \frac{t_2}{T} , \omega + \lambda_{k_2} ) + \frac{1}{T} \sum_{x_2=1}^{T} I_{N,_{c,d}}( \frac{x_2}{T} , \omega + \lambda_{k_2} ) \Big] \times  \notag \\ 
  \Big[ I_{N,_{c,d}}( \frac{t_2}{T} , \omega - \lambda_{k_2} ) - \frac{1}{T} \sum_{x_3=1}^{T} I_{N,_{c,d}}( \frac{x_3}{T} , \omega - \lambda_{k_2} ) -  I_{N,_{c,d}}( \frac{t_2}{T} , \omega + \lambda_{k_2} ) + \frac{1}{T} \sum_{x_4=1}^{T} I_{N,_{c,d}}( \frac{x_4}{T} , \omega + \lambda_{k_2} ) \Big]^{*} \Bigg).\end{gather} 

We can now look at cumulant terms of different types and see the behavior as $T \rightarrow \infty$. First, for components $(a,b)$ and $(c,d)$,we consider cumulant terms of the type  

\begin{gather} \label{eq:var_term_1}
cum \Big( I_{N,_{a,b}}( \frac{t_1}{T} , \omega \pm \lambda_{k_1} )I_{N,_{a,b}}^{*}( \frac{t_1}{T} , \omega \pm \lambda_{k_1} ) , I_{N,_{c,d}}( \frac{t_2}{T} , \omega \pm \lambda_{k_2} )I_{N,_{c,d}}^{*}( \frac{t_2}{T} , \omega \pm \lambda_{k_2} ) \Big) =  \notag \\ 
\frac{1}{ (2 \pi N)^4} \sum_{r_1,r_2,r_3,r_4 = 0}^{N-1} \sum_{s_1,s_2,s_3,s_4 = 0}^{N-1}   \sum_{l_1,m_1,n_1,o_1 = -\infty}^{\infty} \sum_{l_2,m_2,n_2,o_2 = -\infty}^{\infty} \notag \\
cum \Bigg( \Big( \Phi_a( u_{t_{r_1}} , l) \varepsilon_{t_{r_1} - l_1} \Big) \times \Big( \Phi_b ( u_{t_{r_2}} , m_1) \varepsilon_{t_{r_2} - m_1} \Big) \times
  \Big( \Phi_a ( u_{t_{r_3}} , n_1) \varepsilon_{t_{r_3} - n_1} \Big) \times \Big( \Phi_b ( u_{t_{r_4}} , o_1) \varepsilon_{t_{r_4} - o_1} \Big) ,  \notag  \\ 
  \Big( \Phi_c( u_{t_{s_1}} , l_2) \varepsilon_{t_{s_1} - l_2} \Big) \times \Big( \Phi_d ( u_{t_{s_2}} , m_2) \varepsilon_{t_{s_2} - m_2} \Big) \times
  \Big( \Phi_c ( u_{t_{s_3}} , n_2) \varepsilon_{t_{s_3} - n_2} \Big) \times \Big( \Phi_d ( u_{t_{s_4}} , o_2) \varepsilon_{t_{s_4} - o_2} \Big) \Bigg) \times \notag \\ 
  \textrm{exp}(-i \theta_{k_1} (r_1-r_2+r_3-r_4) ) \times  \textrm{exp}(-i \theta_{k_2} (s_1-s_2+s_3-s_4) ) + O(\frac{1}{T}),
\end{gather}
where $\theta_{k_1,\pm} = \omega \pm \lambda_{k_1}$ and $\theta_{k_2,\pm} = \omega \pm \lambda_{k_2}$. Noting that $\varepsilon_t$ is Gaussian, an application of Theorem 2.3.2 of \citet{brillinger81} yield certain terms that are non-vanishing asymptotically. The term in \eqref{eq:var_term_1} leads to 
\begin{gather}
\frac{1}{T^2} \sum_{t_1,t_2=1}^T \frac{1}{W^2} \sum_{k_1,k_2=1}^{W} cum \Big( I_{N,_{a,b}}( \frac{t_1}{T} , \theta_{k_1,\pm} )I_{N,_{a,b}}^{*}( \frac{t_1}{T} , \theta_{k_1,\pm} ) , I_{N,_{c,d}}( \frac{t_2}{T} , \theta_{k_2,\pm} )I_{N,_{c,d}}^{*}( \frac{t_2}{T} , \theta_{k_2,\pm} ) \Big) =   \notag \\
\frac{1}{T^2} \sum_{t=1}^T \frac{1}{W^2} \sum_{k_1,k_2=1}^{W} \frac{40}{ (2 \pi)^4 } f_{a,b}(\frac{t}{T} , \theta_{k_1,\pm})f_{a,b}(\frac{t}{T} , \theta_{k_1,\pm})^{*} f_{c,d}(\frac{t}{T} , \theta_{k_2,\pm})f_{c,d}(\frac{t}{T} , \theta_{k_2,\pm})^{*} + o(1). 
\end{gather}  
The above term can be approximated by $\frac{1}{T 2 \pi c}\int_0^1  \int_{0}^{2 \pi c} f_{a,b}(u,\omega \pm \lambda)f_{a,b}(u, \omega \pm \lambda)^{*} f_{c,d}(u,\omega \pm \lambda)f_{c,d}(u, \omega \pm \lambda)^{*} d \lambda \; du + o(1)$. The treatment of the cumulant terms from the other term types follows similarly. 
\end{proof}

\begin{proof}[\textbf{Proof of Theorem \ref{thm:consistency_number_of_cpts}}]

With the sets $C_{N,1}$ and $C_{N,2}$ defined in \eqref{eq:points_sets_defn}, the proof of this result follows from the following two results.

\begin{itemize}

\item[(a).] $P \Big( \bigcup_{\omega \in C_{N,2}} \widehat{D}(\omega) > \eta_{T,\omega}  \Big) \overset{T \rightarrow \infty}{\longrightarrow } 0  $, 

\item[(b).] $P \Big( \bigcap_{\omega \in \{\omega_1,\omega_2,\hdots,\omega_K \} }  \widehat{D}(\omega) > \eta_{T,\omega}  \Big) \overset{T \rightarrow \infty}{\longrightarrow } 1$. 
\end{itemize}

For any $\omega \in C_{N,2}$, an application of Theorem \ref{thm:consistency_test_statistic}(a) yields

\begin{gather}
P( \bigcup_{\omega \in C_{N,2}} \widehat{D}(\omega) > \eta_{T,\omega}  ) \leq \sum_{\omega \in C_{N,2}} P(  \widehat{D}(\omega) > \eta_{T,\omega}  ) \\
\leq  \sum_{\omega \in C_{N,2}}  \frac{E(\widehat{D}(\omega))}{\eta_{T,\omega}}  \; \overset{T \rightarrow \infty}{\longrightarrow } 0. 
\end{gather} 

The result in (b) follows by an application of Theorem \ref{thm:consistency_test_statistic}(b).

\end{proof}

\end{document}